\documentclass[a4paper]{article}
\usepackage{lmodern}
\pdfoutput=1
\usepackage{graphicx,wrapfig}
\usepackage{lipsum}
\usepackage{amsmath,amssymb,mathrsfs}
\usepackage{hyperref}
\usepackage{url}
\usepackage{mathtools}
\usepackage{changepage}
\usepackage[ruled,vlined]{algorithm2e}
\usepackage{amsmath}
\usepackage{multirow}
\usepackage{wrapfig}
\usepackage{subfig}
\usepackage{color}
\usepackage{epsfig,enumerate,amsmath,amsfonts,amssymb,amsthm,mathrsfs,ifpdf}
\usepackage{indentfirst,relsize}
\usepackage{setspace,graphicx}
\usepackage{latexsym}
\usepackage[all]{xy}
\usepackage[dvipsnames]{pstricks}
\usepackage{pst-grad} 
\usepackage{pst-plot} 
\usepackage[margin = 2.50cm]{geometry}

\usepackage{authblk} 
\usepackage{algpseudocode}
\usepackage{color,soul}
\usepackage{enumitem}

\usepackage{orcidlink}
\usepackage{tabularx}
\usepackage{pst-grad} 
\usepackage{pst-plot} 
\usepackage{enumitem}

\usepackage{orcidlink}

\usepackage{tikz}
\usetikzlibrary{arrows.meta,positioning}

\newcommand{\remove}[1]{}

\newtheorem{theorem}{Theorem}
\newtheorem{lemma}[theorem]{Lemma}

\newtheorem{remark}{Remark}

\usepackage{todonotes}
\usepackage[most]{tcolorbox}

\title{Roman Domination on Circular-Convex, Triad-Convex Bipartite Graphs and \(P_4\)-Tidy Graphs}

\author[1]{Gautam K. Das\footnote{gkd@iitg.ac.in}}
\author[1]{Kamal Santra \orcidlink{0009-0006-5997-1452} \footnote{kamal.7.2013@gmail.com, kamal.santra@iitg.ac.in}}
\affil[1]{Department of Mathematics\\
	
	Indian Institute of Technology Guwahati\\
	
	Guwahati, 781039, Assam, India}

\date{}

\begin{document}

	\maketitle
\begin{abstract}
	The Roman Domination Problem (RDP) on a graph \(G=(V,E)\) asks for a labeling function \(f:V\rightarrow\{0,1,2\}\) such that every vertex assigned value \(0\) is adjacent to a vertex assigned value \(2\). The objective is to minimize the total weight \(\sum_{v\in V} f(v)\); this minimum value is the Roman domination number of \(G\), denoted by \(\gamma_R(G)\). In this paper, we study RDP on graph classes motivated by convexity and induced-\(P_4\) structure. First, we consider circular-convex bipartite graphs, a natural superclass of convex bipartite graphs, where RDP is already known to be polynomial-time solvable. Assuming that a circular-convex representation is given, we compute \(\gamma_R(G)\) in \(O(n^6)\) time by cutting the circular order, separating interval and wrap-around vertices, and branching over at most two wrap-around vertices assigned value \(2\). Second, we study triad-convex bipartite graphs, a restricted subclass of tree-convex bipartite graphs whose convexity tree is a subdivision of \(K_{1,3}\). Although RDP is hard on broader tree-convex subclasses such as star-convex and comb-convex bipartite graphs, we show that \(\gamma_R(G)\) can be computed in \(O(n^7)\) time on triad-convex bipartite graphs. Finally, we study \(P_4\)-tidy graphs, which properly extend cographs. Using the Giakoumakis et al. structural decomposition of \(P_4\)-tidy graphs, we give a direct, exact algorithm that computes \(\gamma_R(G)\) in \(O(n+m)\) time. These results extend the algorithmic boundary of Roman domination on convexity-based bipartite graphs and \(P_4\)-structured graph classes.
\end{abstract}

	{\bf Keywords.}
Roman domination; Circular-convex bipartite graph; Triad-convex bipartite graph; \(P_4\)-tidy graph; Cograph; Dynamic programming; Linear-time algorithm


\section{Introduction}\label{sec:introduction}

Domination is a central topic in graph theory and graph algorithms, with applications in facility location, communication networks, wireless sensor networks, and social network analysis. A set \(D\subseteq V(G)\) is a dominating set of a graph \(G\) if every vertex outside \(D\) has a neighbour in \(D\). The minimum size of such a set is the domination number \(\gamma(G)\). The classical theory of domination and its variants is treated in detail in the books and surveys of Haynes et al.~\cite{haynes1998fundamentals}, Hedetniemi and Laskar~\cite{hedetniemi1991bibliography}, and Haynes et al.~\cite{haynes2020topics}.

Roman domination is one of the most studied labelled variants of domination. It was introduced by Cockayne et al.~\cite{cockayne2004roman}, and was motivated by the classical problem of defending the Roman Empire with limited military resources~\cite{stewart1999defend}. A Roman dominating function of a graph \(G\) is a function \(f:V(G)\rightarrow\{0,1,2\}\) such that every vertex \(v\) with \(f(v)=0\) has a neighbour \(u\) with \(f(u)=2\). The weight of \(f\) is \(w(f)=\sum_{v\in V(G)}f(v)\), and the Roman domination number \(\gamma_R(G)\) is the minimum possible weight of a Roman dominating function of \(G\). The corresponding decision problem asks whether \(\gamma_R(G)\leq k\), where \(k\) is part of the input.

The Roman Domination Problem is computationally hard in general. It is NP-complete on general graphs~\cite{dreyer2000applications}, and it remains NP-complete on bipartite graphs, split graphs, and planar graphs~\cite{cockayne2004roman,mcrae2002private}. Poureidi and Fathali~\cite{poureidi2023algorithmic} showed that the problem is also NP-complete on circle graphs. On the positive side, Liedloff et al.~\cite{liedloff2008efficient} gave efficient algorithms for Roman domination on several graph classes, including interval graphs, cographs, \(D\)-octopus graphs, AT-free graphs, and distance-hereditary graphs. Padamutham and Palagiri~\cite{padamutham2020algorithmic} studied further algorithmic aspects of Roman domination; in particular, they showed hardness for star-convex and comb-convex bipartite graphs and gave polynomial-time algorithms for bounded-treewidth graphs, chain graphs, and threshold graphs. Roman domination has also been studied on regular graphs and other special graph families; see, for example, Fu et al.~\cite{fu2009romanregular}.

Figure~\ref{fig:rdp-status} summarizes the status of Roman domination on
several graph classes related to this paper. The arrows indicate subclass
relations. Red boxes indicate known NP-complete classes, blue dashed boxes
indicate classes where polynomial-time or linear-time algorithms were already
known, and yellow boxes highlight the classes treated in this paper.

\begin{figure}[h]
	\centering
	\resizebox{\textwidth}{!}{
		\begin{tikzpicture}[
			x=0.9cm,
			y=1.5cm,
			>=Latex,
			every node/.style={font=\small},
			edge/.style={->, thick, shorten >=2pt, shorten <=2pt},
			hard/.style={
				draw=red!75!black,
				text=red!75!black,
				very thick,
				rounded corners,
				align=center,
				inner sep=5pt,
				minimum height=9mm
			},
			known/.style={
				draw=blue!70!black,
				text=blue!70!black,
				dashed,
				very thick,
				rounded corners,
				align=center,
				inner sep=5pt,
				minimum height=9mm
			},
			paper/.style={
				draw=black,
				fill=yellow!25,
				very thick,
				rounded corners,
				align=center,
				inner sep=5pt,
				minimum height=9mm
			}
			]
			
			\node[hard] (general) at (0,0) {General\\Graphs\\{\scriptsize\cite{dreyer2000applications}}};
			
			\node[hard]  (planar) at (-13.0,-2.0) {Planar\\Graphs\\{\scriptsize\cite{cockayne2004roman,mcrae2002private}}};
			\node[hard]  (bip)    at (-8.0,-2.0)  {Bipartite\\Graphs\\{\scriptsize\cite{cockayne2004roman,mcrae2002private}}};
			\node[hard]  (split)  at (-3.0,-2.0)  {Split\\Graphs\\{\scriptsize\cite{cockayne2004roman,mcrae2002private}}};
			\node[known] (atfree) at (2.5,-2.0)   {AT-free\\Graphs\\{\scriptsize\cite{liedloff2008efficient}}};
			\node[hard]  (circle) at (7.5,-2.0)   {Circle\\Graphs\\{\scriptsize\cite{poureidi2023algorithmic}}};
			\node[paper] (p4tidy) at (12.5,-2.0)  {\(P_4\)-Tidy\\Graphs\\};
			
			\node[known] (trees)     at (-13.0,-4.4) {Trees\\{\scriptsize\cite{cockayne2004roman}}};
			\node[hard]  (treeconv)  at (-8.5,-4.4)  {Tree-Convex\\Bipartite Graphs\\{\scriptsize\cite{bao2012treeconvex,padamutham2020algorithmic}}};
			\node[paper] (circular)  at (-4.2,-4.4)  {Circular-Convex\\Bipartite Graphs\\};
			\node[known] (threshold) at (-0.5,-4.4)  {Threshold\\Graphs\\{\scriptsize\cite{padamutham2020algorithmic}}};
			\node[known] (interval)  at (2.8,-4.4)   {Interval\\Graphs\\{\scriptsize\cite{liedloff2008efficient}}};
			\node[known] (dh)        at (7.5,-4.4)   {Distance-Hereditary\\Graphs\\{\scriptsize\cite{liedloff2008efficient}}};
			
			\node[hard]  (star)    at (-13.0,-6.9) {Star-Convex\\Bipartite Graphs\\{\scriptsize\cite{padamutham2020algorithmic}}};
			\node[hard]  (comb)    at (-9.0,-6.9)  {Comb-Convex\\Bipartite Graphs\\{\scriptsize\cite{padamutham2020algorithmic}}};
			\node[paper] (triad)   at (-5.0,-6.9)  {Triad-Convex\\Bipartite Graphs\\};
			\node[hard]  (chordal) at (1.5,-6.9)  {Chordal\\Bipartite Graphs\\{\scriptsize\cite{das2026roman}}};
			\node[known] (cograph) at (10.5,-6.9)  {Cographs\\\(P_4\)-free\\{\scriptsize\cite{liedloff2008efficient}}};
			
			\node[known] (convex)  at (1.0,-9.2)  {Convex\\Bipartite Graphs\\{\scriptsize\cite{das2026roman}}};
			
			\draw[edge] (general) -- (planar);
			\draw[edge] (general) -- (bip);
			\draw[edge] (general) -- (split);
			\draw[edge] (general) -- (atfree);
			\draw[edge] (general) -- (circle);
			\draw[edge] (general) -- (p4tidy);
			
			\draw[edge] (split) -- (threshold);
			\draw[edge] (atfree) -- (interval);
			\draw[edge] (circle) -- (dh);
			\draw[edge] (dh) -- (cograph);
			\draw[edge] (p4tidy) -- (cograph);
			
			\draw[edge] (bip) -- (trees);
			\draw[edge] (bip) -- (treeconv);
			\draw[edge] (bip) -- (circular);
			
			\draw[edge] (treeconv) -- (star);
			\draw[edge] (treeconv) -- (comb);
			\draw[edge] (treeconv) -- (triad);
			\draw[edge] (treeconv) -- (chordal);
			
			\draw[edge] (circular) -- (convex);
			\draw[edge] (chordal) -- (convex);
			
			\node[hard]  (leg1) at (-6.5,-11.3) {Known NP-complete};
			\node[known] (leg2) at (0,-11.3)    {Known polynomial / linear};
			\node[paper] (leg3) at (6.5,-11.3)  {Solved in this paper};
			
		\end{tikzpicture}
	}
	\caption{Status of the Roman Domination Problem on selected graph classes. The classes treated in this paper are highlighted.}
	\label{fig:rdp-status}
\end{figure}

Convexity-based bipartite graph classes form a natural setting for domination problems. A bipartite graph \(G=(X\cup Y,E)\) is convex with respect to \(X\) if \(X\) admits a linear ordering such that the neighborhood of every vertex in \(Y\) is an interval in this ordering. More generally, \(G\) is tree-convex if there is a tree \(T\) on \(X\) such that the neighborhood of every vertex in \(Y\) induces a connected subtree of \(T\)~\cite{bao2012treeconvex}. Important subclasses arise by restricting the shape of \(T\). If \(T\) is a path, a star, or a comb, then one obtains convex, star-convex, or comb-convex bipartite graphs, respectively. Chordal bipartite graphs also fit into the tree-convex framework. Domination on convex, circular-convex, and triad-convex bipartite graphs has been studied by Bang-Jensen et al.~\cite{bangjensen1999domination} and Pandey and Panda~\cite{pandey2019domination}.

Das et al.~\cite{das2026roman} studied Roman domination on convex bipartite graphs and gave a dynamic programming algorithm for computing \(\gamma_R(G)\) in \(O(n^3)\) time. They also settled the complexity of Roman domination on chordal bipartite graphs. Their algorithm for convex bipartite graphs shows that the interval ordering can be exploited to keep only compact boundary information. However, the complexity of Roman domination was not settled for some natural classes beyond the linear-order setting, including circular-convex bipartite graphs and triad-convex bipartite graphs. This motivates the question of whether the dynamic programming approach for convex bipartite graphs can be extended to broader convexity-based graph classes.

Another important direction is given by graph classes defined through induced \(P_4\)-structure. Cographs are precisely the \(P_4\)-free graphs, and Roman domination is already known to be solvable in \(O(n+m)\) time on cographs~\cite{liedloff2008efficient}. The class of \(P_4\)-tidy graphs properly extends cographs and admits a decomposition due to Giakoumakis et al.~\cite{giakoumakis1997p_4}. This raises a natural question: can the linear-time solvability known for cographs be extended to the larger \(P_4\)-tidy class by using its decomposition structure?

\subsection{Our Contribution}

In this paper, we continue the study of Roman domination on structured graph classes. Our results move in three algorithmic directions: two for convexity-based bipartite graph classes and one for a graph class defined by induced-\(P_4\) structure.

First, we consider circular-convex bipartite graphs. In these graphs, one part of the bipartition has a circular ordering such that the neighborhood of every vertex in the other part appears consecutively on the circle. This class naturally generalizes convex bipartite graphs, for which Roman domination is already known to be polynomial-time solvable~\cite{das2026roman}. We show that this tractability also extends to the circular setting. Assuming that a circular-convex representation is given, we present an \(O(n^6)\)-time dynamic-programming algorithm for computing \(\gamma_R(G)\). The main idea is to cut the circular ordering into a linear ordering, distinguish interval vertices from wrap-around vertices, and branch over at most two wrap-around vertices that are assigned value \(2\). Once this choice is fixed, the remaining instance can be solved by a boundary-aware convex dynamic program.

Second, we consider triad-convex bipartite graphs. These graphs form a restricted subclass of tree-convex bipartite graphs, where the underlying convexity tree is a subdivision of \(K_{1,3}\). Since Roman domination is NP-complete on broader tree-convex subclasses, such as star-convex and comb-convex bipartite graphs~\cite{padamutham2020algorithmic}, it is natural to ask whether the additional three-arm structure of triad-convex bipartite graphs leads to a polynomial-time algorithm. We answer this question affirmatively. Assuming that a triad-convex representation is given, we design an \(O(n^7)\)-time algorithm for computing \(\gamma_R(G)\). The algorithm branches over at most three central \(Y\)-vertices assigned value \(2\), fixes the value of the center of the triad, solves the three arms independently using boundary-aware convex dynamic programming, and then combines the resulting arm states.

Finally, we study \(P_4\)-tidy graphs. This class properly contains cographs, for which Roman domination is already known to be linear-time solvable~\cite{liedloff2008efficient}. We give a direct exact algorithm for \(P_4\)-tidy graphs using the Giakoumakis et al. decomposition~\cite{giakoumakis1997p_4}. The algorithm stores two values at each node of the decomposition tree, namely \(\gamma_R(G)\) and the minimum Roman domination value among assignments that use at least one vertex of value \(2\). It handles disconnected nodes, complement disconnected nodes, spider and quasi-spider nodes, and the exceptional graphs \(K_1\), \(P_5\), \(\overline{P_5}\), and \(C_5\). This gives an \(O(n+m)\)-time algorithm for computing \(\gamma_R(G)\) on \(P_4\)-tidy graphs.

The rest of the paper is organized as follows. Section~\ref{sec:preliminaries} gives the notation and definitions used throughout the paper. Section~\ref{sec:rd-circular} presents the algorithm for circular-convex bipartite graphs. Section~\ref{sec:rd-triad} gives the algorithm for triad-convex bipartite graphs. Section~\ref{sec:rd-p4-tidy} gives the linear-time lgorithm for \(P_4\)-tidy graphs. Finally, Section~\ref{sec:conclusion} concludes the paper.

\section{Preliminaries}\label{sec:preliminaries}

All graphs considered in this paper are finite, simple, and undirected. For a graph \(G\), we denote its vertex set and edge set by \(V(G)\) and \(E(G)\), respectively. For a vertex \(v\in V(G)\), the open neighborhood of \(v\) is denoted by \(N_G(v)\), or simply by \(N(v)\) when the graph is clear from the context. A vertex of degree one is called a pendant vertex. For a positive integer \(n\), we write \([n]=\{1,2,\ldots,n\}\).

A Roman dominating function, or RDF, of a graph \(G\) is a function \(f:V(G)\rightarrow\{0,1,2\}\) such that every vertex \(v\) with \(f(v)=0\) has a neighbour \(u\in N(v)\) with \(f(u)=2\). The weight of \(f\) is \(w(f)=\sum_{v\in V(G)}f(v)\). The Roman domination number of \(G\), denoted by \(\gamma_R(G)\), is the minimum weight of an RDF of \(G\). The decision version of Roman Domination asks, given a graph \(G\) and an integer \(k\), whether \(\gamma_R(G)\leq k\)~\cite{cockayne2004roman}.

Let \(G=(X\cup Y,E)\) be a bipartite graph. We say that \(G\) is convex with respect to \(X\) if there is a linear ordering \(X=\{x_1,x_2,\ldots,x_m\}\) such that, for every \(y\in Y\), the neighborhood \(N(y)\) is an interval in this ordering. Thus, for every \(y\in Y\), there are integers \(l(y)\) and \(r(y)\), with \(1\leq l(y)\leq r(y)\leq m\), such that \(N(y)=\{x_{l(y)},x_{l(y)+1},\ldots,x_{r(y)}\}\). We call \(l(y)\) and \(r(y)\) the left and right endpoints of \(y\), respectively. Convex bipartite graphs are also called line-convex bipartite graphs in some papers.

A bipartite graph \(G=(X\cup Y,E)\) is circular-convex with respect to \(X\) if the vertices of \(X\) admit a circular ordering such that \(N(y)\) is consecutive on the circle for every \(y\in Y\)~\cite{bangjensen1999domination,pandey2019domination}. In the algorithm for circular-convex bipartite graphs, we assume that such a representation is given. After cutting the circle between \(x_m\) and \(x_1\), each vertex \(y\in Y\) is represented in one of two forms. It is an interval vertex if \(N(y)=\{x_{l(y)},x_{l(y)+1},\ldots,x_{r(y)}\}\) with \(l(y)\leq r(y)\). It is a wrap-around vertex if \(N(y)=\{x_{l(y)},x_{l(y)+1},\ldots,x_m\}\cup\{x_1,x_2,\ldots,x_{r(y)}\}\) with \(l(y)>r(y)\). We denote the set of interval vertices by \(Y^I\), and the set of wrap-around vertices by \(Y^W\).

More generally, a bipartite graph \(G=(X\cup Y,E)\) is tree-convex with respect to \(X\) if there is a tree \(T\) on \(X\) such that \(N(y)\) induces a connected subtree of \(T\) for every \(y\in Y\)~\cite{bao2012treeconvex}. Several important subclasses are obtained by restricting the shape of \(T\). If \(T\) is a path, a star, or a comb, then one obtains convex, star-convex, or comb-convex bipartite graphs, respectively.

A bipartite graph \(G=(X\cup Y,E)\) is triad-convex with respect to \(X\) if there is a tree \(T\) on \(X\), where \(T\) is a subdivision of \(K_{1,3}\), such that \(N(y)\) induces a connected subtree of \(T\) for every \(y\in Y\)~\cite{pandey2019domination}. We call \(T\) a triad representation of \(G\). Let \(c\) be the unique branching vertex of \(T\). Removing \(c\) leaves three paths, called arms, denoted by \(P_1,P_2,P_3\). We write \(P_k=\{x_{k,1},x_{k,2},\ldots,x_{k,m_k}\}\), ordered from the center outward. A vertex \(y\in Y\) is an arm vertex on \(P_k\) if \(N(y)\subseteq V(P_k)\); in this case, \(N(y)\) is an interval on the path \(P_k\). A vertex \(y\in Y\) is a central \(Y\)-vertex if \(c\in N(y)\). In this case, since \(N(y)\) induces a connected subtree of the triad, we may write \(N(y)=\{c\}\cup \bigcup_{k=1}^{3}\{x_{k,1},x_{k,2},\ldots,x_{k,r_k(y)}\}\), where \(0\leq r_k(y)\leq m_k\). We denote the set of central \(Y\)-vertices by \(Y^C\), and the set of arm vertices on \(P_k\) by \(Y_k\).

An induced \(P_4\) is a chordless path on four vertices. Let \(A\) be an
induced \(P_4\) of a graph \(G\). A vertex
\(v\in V(G)\setminus V(A)\) is called a \emph{partner} of \(A\) if the
subgraph \(G[V(A)\cup\{v\}]\) contains at least two induced \(P_4\)'s. A
graph \(G\) is called \(P_4\)-tidy if every induced \(P_4\) of \(G\) has at
most one partner.

A graph \(G\) is called a \emph{spider} if \(V(G)\) can be partitioned into
three sets \((S,K,R)\), where \(S\) is an independent set, \(K\) is a clique,
\(|S|=|K|\), every vertex of \(R\) is adjacent to all vertices of \(K\) and to
no vertex of \(S\), and there is a bijection between \(S\) and \(K\). In a
\emph{thin spider}, each vertex of \(S\) is adjacent only to its corresponding
vertex in \(K\), whereas in a \emph{thick spider}, each vertex of \(S\) is
adjacent to every vertex of \(K\) except its corresponding vertex. The set
\(R\) is called the \emph{head} of the spider. A \emph{quasi-spider} is obtained
from a spider by replacing exactly one vertex of \(S\) or \(K\) with either
\(K_2\) or \(\overline{K_2}\), while preserving its adjacencies to all remaining
vertices.

The following characterization of \(P_4\)-tidy graphs, due to Giakoumakis
et al.~\cite{giakoumakis1997p_4}, will be used throughout the paper. A graph \(G\) is \(P_4\)-tidy if and
only if one of the following holds:
\begin{itemize}
	\item \(G\) is isomorphic to one of \(K_1\), \(P_5\), \(\overline{P_5}\), or \(C_5\);
	\item \(G\) is disconnected;
	
	\item \(\overline{G}\) is disconnected;
	\item \(G\) is a spider or a quasi-spider.
\end{itemize}

In the spider and quasi-spider cases, the subgraph induced by the head \(R\) is
itself \(P_4\)-tidy. Consequently, every \(P_4\)-tidy graph admits a recursive
decomposition, which we use in Section~\ref{sec:rd-p4-tidy}.


\section{Roman Domination on Circular-Convex Bipartite Graphs}
\label{sec:rd-circular}

In this section, we extend the dynamic programming approach for Roman
Domination on convex bipartite graphs due to Das et
al.~\cite{das2026roman} to circular-convex bipartite graphs. Let
\(G=(X\cup Y,E)\) be a bipartite graph, where
\(X=\{x_1,x_2,\ldots,x_m\}\) is given in a circular order. We assume that a
circular-convex representation of \(G\) is given. Thus, for every vertex
\(y\in Y\), the neighbourhood \(N(y)\) appears consecutively on the circle.

We cut the circle between \(x_m\) and \(x_1\), and view the circular order as
the linear order \(x_1,x_2,\ldots,x_m\). After this cut, each vertex of \(Y\)
is one of two types. If \(N(y)\) does not cross the cut, then
\(N(y)=\{x_{l(y)},x_{l(y)+1},\ldots,x_{r(y)}\}\), where
\(l(y)\leq r(y)\). We call such a vertex an \emph{interval vertex}. If
\(N(y)\) crosses the cut, then
\(N(y)=\{x_{l(y)},\ldots,x_m\}\cup\{x_1,\ldots,x_{r(y)}\}\), where
\(l(y)>r(y)\). We call such a vertex a \emph{wrap-around vertex}. Let
\(Y^I\) denote the set of interval vertices, and let \(Y^W\) denote the set
of wrap-around vertices. A vertex \(y\) with \(N(y)=X\) may be treated as an
interval vertex with \(l(y)=1\) and \(r(y)=m\).

The vertices in \(Y^I\) form an ordinary convex bipartite instance with
respect to the linear order \(x_1,x_2,\ldots,x_m\). Therefore, the
convex-bipartite dynamic programming algorithm of Das et
al.~\cite{das2026roman} can be used as the base routine after the wrap-around
vertices are handled. The difficulty comes from the vertices in \(Y^W\),
because their neighbourhoods are not intervals in this linear order. We
handle them by branching over the wrap-around vertices that receive Roman
value \(2\).

\subsection{A Structural Observation}

We first show that, in some optimal solution, only a small number of
wrap-around vertices need to receive value \(2\).

\begin{lemma}
	\label{lem:circular-two-wrap}
	There exists an optimal Roman dominating function in which at most two
	vertices of \(Y^W\) are assigned value \(2\).
\end{lemma}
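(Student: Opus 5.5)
The plan is an exchange argument. Starting from an optimal Roman dominating function with at least three wrap-around vertices of value \(2\), I modify it without increasing its weight until at most two remain. The key structural fact is that every wrap-around vertex is adjacent to both \(x_1\) and \(x_m\). Indeed, for \(y\in Y^W\) we have \(l(y)\le m\) and \(r(y)\ge 1\), so \(N(y)=\{x_{l(y)},\ldots,x_m\}\cup\{x_1,\ldots,x_{r(y)}\}\) always contains \(x_m\) and \(x_1\).

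Let \(f\) be an optimal RDF and let \(W_2=\{y\in Y^W: f(y)=2\}\). Assume \(|W_2|\ge 3\), since otherwise there is nothing to prove.
\begin{enumerate}
\item Choose \(y_L\in W_2\) with minimum \(l(y_L)\), and \(y_R\in W_2\) with maximum \(r(y_R)\); if these coincide, choose \(y_R\) to be any other vertex of \(W_2\).
\item For every \(y\in W_2\) we have \(l(y_L)\le l(y)\) and \(r(y)\le r(y_R)\). Hence
\[
N(y)\subseteq \{x_{l(y_L)},\ldots,x_m\}\cup\{x_1,\ldots,x_{r(y_R)}\}\subseteq N(y_L)\cup N(y_R).
\]
\item Let \(k=|W_2\setminus\{y_L,y_R\}|\ge 1\). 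Define \(g\) from \(f\) by setting \(g(y)=0\) for every \(y\in W_2\setminus\{y_L,y_R\}\), setting \(g(x_1)=2\), and leaving all other values unchanged.
\end{enumerate}

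Next I check that \(g\) is an RDF, considering only the vertices whose status could have changed.
\begin{itemize}
\item A vertex \(x\in X\) with \(g(x)=0\) that was dominated in \(f\) by some removed \(y\) lies in \(N(y)\subseteq N(y_L)\cup N(y_R)\). So it is still dominated by \(y_L\) or \(y_R\), both of which keep value \(2\).
\item A newly zeroed wrap-around vertex \(y\) is adjacent to \(x_1\), and \(g(x_1)=2\).
\item Every other vertex of \(Y\) either had \(f\)-value \(1\) or \(2\) (unchanged), or had \(f\)-value \(0\) and was dominated by some \(x\in X\) with \(f(x)=2\). No \(X\)-vertex of value \(2\) has lost that value, so these vertices remain dominated.
\end{itemize}

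For the weight, \(w(g)\le w(f)-2k+2\le w(f)\), because raising \(f(x_1)\) to \(2\) costs at most \(2\). By optimality of \(f\), the function \(g\) is also optimal. Moreover, exactly two vertices of \(Y^W\) have value \(2\) under \(g\), namely \(y_L\) and \(y_R\), since the change made no other vertex of \(Y^W\) take value \(2\).

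I expect the main obstacle to be noticing that the removed vertices can no longer keep themselves dominated. Simply deleting the redundant \(2\)'s breaks the condition for those vertices themselves. The fix is the single relabelling of \(x_1\), which works because it is a common neighbour of all of \(Y^W\), and its cost of at most \(2\) is paid for by \(k\ge 1\).
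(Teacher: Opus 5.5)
Your proof is correct, and its core is the same as the paper's: take the vertex with minimum left endpoint and the one with maximum right endpoint, and use $N(y)\subseteq N(y_L)\cup N(y_R)$ for all $y\in W_2$. Where you differ is the modification step. You lower the redundant vertices to $0$ and then pay for $g(x_1)=2$ so that they stay dominated. The paper lowers them from $2$ to $1$ instead. A vertex of value $1$ has no domination requirement, so the demoted vertices are automatically safe and no compensation at $x_1$ is needed.

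With that choice the weight drops strictly by $|S_f\setminus\{y_a,y_b\}|\ge 1$, which contradicts optimality. The obstacle you describe therefore comes only from choosing $0$ rather than $1$. The paper's route is shorter and also stronger: it shows that \emph{every} optimal RDF has at most two wrap-around vertices of value $2$. Your route needs the extra fact that $x_1$ is a common neighbour of all of $Y^W$, and it only gives $w(g)\le w(f)$, which proves existence. Its one extra output, that you can arrange exactly two such vertices, is not needed. In fact, your identity $w(g)=w(f)-2k+2-f(x_1)$ together with optimality already forces $k=1$ and $f(x_1)=0$, which suggests the configuration is impossible.

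One small slip. When $y_L=y_R$ you replace $y_R$ by an arbitrary other vertex of $W_2$, and then the inequality $r(y)\le r(y_R)$ in step 2 need not hold. The inclusion you actually use still holds, because in that case $N(y)\subseteq N(y_L)$ for every $y\in W_2$. Step 2 should be stated that way.
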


\begin{proof}
	Let \(f\) be an optimal Roman dominating function, and let
	\(S_f=\{y\in Y^W:f(y)=2\}\). If \(|S_f|\leq 2\), then there is nothing to
	prove. Suppose that \(|S_f|>2\).
	
	Every vertex \(y\in S_f\) has \(N(y)=\{x_{l(y)},\ldots,x_m\}\cup\{x_1,\ldots,x_{r(y)}\}\). Among the
	vertices of \(S_f\), choose a vertex \(y_a\) with the minimum left endpoint
	\(l(y_a)\), and choose a vertex \(y_b\) with the maximum right endpoint
	\(r(y_b)\). Then \(N(y_a)\cup N(y_b)\) contains every vertex of \(X\) that
	is dominated by some vertex of \(S_f\). Indeed, among all suffixes
	\(\{x_{l(y)},\ldots,x_m\}\), the suffix corresponding to \(y_a\) is the
	largest because \(l(y_a)\) is minimum. Similarly, among all prefixes
	\(\{x_1,\ldots,x_{r(y)}\}\), the prefix corresponding to \(y_b\) is the
	largest because \(r(y_b)\) is maximum. Hence, \(y_a\) and \(y_b\) together
	dominate every vertex of \(X\) that is dominated by at least one vertex of
	\(S_f\).
	
	Now, change the value of every vertex in
	\(S_f\setminus\{y_a,y_b\}\) from \(2\) to \(1\). These vertices remain safe,
	since a vertex assigned value \(1\) does not need to be dominated. No vertex
	of \(X\) loses domination, because \(y_a\) and \(y_b\) together dominate all
	vertices of \(X\) that were dominated by the vertices of \(S_f\). Moreover,
	vertices of \(Y\) do not dominate other vertices of \(Y\), since \(G\) is
	bipartite with bipartition \((X,Y)\). Thus, the modified function is still a
	Roman dominating function, but it has a smaller weight, contradicting the
	optimality of \(f\). Therefore, some optimal Roman dominating functions use
	at most two wrap-around vertices with value \(2\).
\end{proof}

\subsection{Dynamic Programming for a Fixed Wrap-Around Choice}

By Lemma~\ref{lem:circular-two-wrap}, there exists an optimal Roman
dominating function in which at most two vertices of \(Y^W\) receive value
\(2\). We therefore consider every subset \(S\subseteq Y^W\) with
\(|S|\leq 2\). In the branch corresponding to \(S\), every vertex of \(S\)
is forced to receive value \(2\), while every vertex of \(Y^W\setminus S\)
is forbidden from receiving value \(2\). Thus, \(S\) represents the exact
set of wrap-around vertices assigned value \(2\) in the current branch. The
vertices of \(S\) contribute \(2|S|\) to the weight of the solution. This
contribution is not included in the dynamic programming table and is added
after the dynamic program terminates.

\paragraph*{\underline{\textbf{Pre-dominated vertices of \texorpdfstring{\(X\)}}}}

The vertices of \(S\) may already dominate some vertices of \(X\). For every
\(x_i\in X\), define \(\operatorname{pre}_S(x_i)=0\) if \(x_i\) is adjacent
to at least one vertex of \(S\), and define
\(\operatorname{pre}_S(x_i)=1\) otherwise. Therefore,
\(\operatorname{pre}_S(x_i)=0\) exactly when
\(x_i\in\bigcup_{y\in S}N(y)\). In this case, if \(x_i\) is assigned value
\(0\), its Roman domination requirement is already satisfied by a forced
value-\(2\) vertex of \(S\). On the other hand,
\(\operatorname{pre}_S(x_i)=1\) means that \(x_i\) is not pre-dominated by
\(S\). Hence, if such a vertex is assigned value \(0\), it must eventually
be dominated by an interval vertex of \(Y^I\) assigned value \(2\).

The convention that \(\operatorname{pre}_S(x_i)=0\) denotes a pre-dominated
vertex is useful in the transitions: the value \(0\) indicates that no
further domination obligation needs to be created for \(x_i\).

\paragraph*{\underline{\textbf{Processing order.}}}

For the fixed set \(S\), we run a modified version of the convex-bipartite
dynamic program of Das et al.~\cite{das2026roman} on the subgraph induced by
\(X\cup Y^I\). Recall that every vertex \(y\in Y^I\) has an interval
neighbourhood \(N(y)=\{x_{l(y)},x_{l(y)+1},\ldots,x_{r(y)}\}\). For every
\(i\in\{1,\ldots,m\}\), let
\(Y_i^I=\{y\in Y^I:r(y)=i\}\). The vertices are processed in the order
\(x_1,Y_1^I,x_2,Y_2^I,\ldots,x_m,Y_m^I\). Thus, immediately after processing \(x_i\), we process all interval vertices
whose right endpoint is \(i\). The vertices inside \(Y_i^I\) may be processed
in any fixed order.

This processing order ensures that, when a vertex \(y\in Y_i^I\) is
processed, every vertex of \(N(y)\) has already been processed. Indeed,
\(N(y)=\{x_{l(y)},\ldots,x_i\}\). Consequently, whether \(y\) can receive
value \(0\) can be decided from the value-\(2\) vertices of \(X\) already
recorded in the state. Moreover, if \(y\) receives value \(2\), all processed
vertices of \(X\) that it dominates are known at that moment.

\paragraph*{\underline{\textbf{Pending vertices}}}

Suppose that a processed vertex \(x_j\) has been assigned value \(0\). If
\(\operatorname{pre}_S(x_j)=0\), then \(x_j\) is already dominated by \(S\).
Otherwise, \(x_j\) must be dominated by a processed or future interval vertex
of \(Y^I\) assigned value \(2\). Until such a vertex is selected, we call
\(x_j\) \emph{pending}.

It is enough to remember only the earliest pending vertex. To see this,
suppose that \(x_p\) is the earliest pending vertex. Every other pending
vertex has an index at least \(p\). A future interval vertex \(y\), when
processed, has a neighbourhood of the form
\(N(y)=\{x_{l(y)},\ldots,x_{r(y)}\}\), where all currently processed vertices
have index at most \(r(y)\). If \(l(y)\leq p\), then \(y\) contains not only
\(x_p\), but also every later pending vertex. Hence, assigning value \(2\) to
\(y\) removes all current pending obligations. If \(l(y)>p\), then \(y\)
does not dominate the earliest pending vertex, so the branch must continue
to remember \(p\), even if \(y\) dominates some later pending vertices. Thus,
the earliest pending index contains all the information required by future
transitions.

\paragraph*{\underline{\textbf{State definition}}}

As in the convex-bipartite algorithm, we distinguish a special state value
from the numerical value \(+\infty\) used for an infeasible table entry. Let
\(a_\infty=m+1\) and \(p_\infty=m+1\). The symbol \(a_\infty\) means that no
processed vertex of \(X\) has been assigned value \(2\), while
\(p_\infty\) means that no processed vertex of \(X\) is pending. Although
both special values are represented numerically by \(m+1\), their different
symbols make their roles in the state explicit. The value \(+\infty\) is
reserved exclusively for an infeasible dynamic programming entry.

A state is a triple \((a,b,p)\). The coordinates have the following
interpretations.

\begin{itemize}
	\item The value \(a\) is the smallest index \(t\) such that the processed
	vertex \(x_t\) has been assigned value \(2\). If no processed vertex of
	\(X\) has value \(2\), then \(a=a_\infty\).
	
	\item The value \(b\) is the largest index \(t\) such that the processed
	vertex \(x_t\) has been assigned value \(2\). If no processed vertex of
	\(X\) has value \(2\), then \(b=0\).
	
	\item The value \(p\) is the smallest index of a processed vertex \(x_p\)
	such that \(f(x_p)=0\), \(\operatorname{pre}_S(x_p)=1\), and \(x_p\) has
	not yet been dominated by a processed vertex of \(Y^I\) assigned value
	\(2\). If no such vertex exists, then \(p=p_\infty\).
\end{itemize}

The table entry \(D(a,b,p)\) stores the minimum weight of a partial assignment
on the processed vertices of \(X\cup Y^I\) that realizes the state
\((a,b,p)\). The contribution \(2|S|\) of the forced vertices in \(S\) is
not included in \(D(a,b,p)\). An entry has value \(+\infty\) when no feasible
partial assignment realizes the corresponding state.

At every processing stage, the table satisfies the following conditions.

\begin{itemize}
	\item Every processed vertex of \(Y^I\) assigned value \(0\) has a
	processed neighbour in \(X\) assigned value \(2\).
	
	\item Every processed vertex of \(X\) assigned value \(0\) is either
	already dominated by a vertex of \(S\), dominated by a processed vertex
	of \(Y^I\) assigned value \(2\), or represented by the pending
	coordinate \(p\).
	
	\item The coordinates \(a\) and \(b\) are respectively the smallest and
	largest indices of the processed vertices of \(X\) assigned value \(2\).
\end{itemize}

Before any vertex is processed, the only feasible state is
\((a_\infty,0,p_\infty)\), with value \(0\). Therefore, we initialize
\(D(a_\infty,0,p_\infty)=0\) and set every other entry to \(+\infty\).

\paragraph*{\underline{\textbf{Why the coordinates \(a\) and \(b\) are needed}}}

The coordinate \(b\) is inherited from the dynamic program for convex
bipartite graphs. It records the largest index of a processed vertex of \(X\)
assigned value \(2\). When an interval vertex \(y\in Y_i^I\), with
\(N(y)=\{x_{l(y)},\ldots,x_i\}\), is assigned value \(0\), the choice is
feasible exactly when \(b\geq l(y)\). Indeed, because \(b\leq i\) at this
stage, the condition \(b\geq l(y)\) implies that \(x_b\in N(y)\) and
\(f(x_b)=2\). Thus, the coordinate \(b\) is necessary during the processing
of the ordinary interval vertices.

The coordinate \(a\), on the other hand, is introduced specifically to
handle the wrap-around vertices after the dynamic program terminates. It
records the smallest index of a vertex of \(X\) assigned the value \(2\). Let
\(y\in Y^W\) be a wrap-around vertex with
\(N(y)=\{x_{l(y)},\ldots,x_m\}\cup\{x_1,\ldots,x_{r(y)}\}\). A value-\(2\)
vertex lies in the prefix \(\{x_1,\ldots,x_{r(y)}\}\) if and only if
\(a\leq r(y)\). Similarly, a value-\(2\) vertex lies in the suffix
\(\{x_{l(y)},\ldots,x_m\}\) if and only if \(b\geq l(y)\). Consequently,
\(y\) is dominated by a value-\(2\) vertex of \(X\) if and only if
\(a\leq r(y)\) or \(b\geq l(y)\).

Therefore, \(b\) has two roles: it is used while processing the interval
vertices in \(Y^I\), and it is also used at the end to test the suffix part
of each wrap-around neighbourhood. The coordinate \(a\) is needed only for
testing the prefix part of a wrap-around neighbourhood. Together, \(a\) and
\(b\) are sufficient, and the exact set of vertices of \(X\) assigned value
\(2\) does not need to be stored.

\paragraph*{\underline{\textbf{Transitions for a vertex of \texorpdfstring{\(X\)}}}}

Suppose that the current state is \((a,b,p)\), and that \(x_i\) is the next
vertex to be processed. We consider the three possible values of \(f(x_i)\).

\begin{itemize}
	\item Suppose that \(f(x_i)=0\). If
	\(\operatorname{pre}_S(x_i)=0\), then \(x_i\) is already dominated by a
	vertex of \(S\), so no new pending obligation is created, and the new
	pending coordinate remains \(p\). If
	\(\operatorname{pre}_S(x_i)=1\), then \(x_i\) is not dominated by \(S\).
	It must therefore become pending, and the new pending coordinate is
	\(\min\{p,i\}\). Since \(x_i\) is the currently processed vertex, this
	value equals \(i\) whenever \(p=p_\infty\), and it equals \(p\) whenever
	an earlier pending vertex already exists. This transition adds no weight.
	
	\item Suppose that \(f(x_i)=1\). A vertex assigned value \(1\) has no
	domination requirement and does not dominate any other vertex for the
	purpose of Roman domination. Hence, the state remains \((a,b,p)\), and
	the weight increases by \(1\).
	
	\item Suppose that \(f(x_i)=2\). Since the vertices of \(X\) are processed
	in increasing order, \(x_i\) becomes the rightmost processed value-\(2\)
	vertex of \(X\). Therefore, the first two new coordinates are
	\(\min\{a,i\}\) and \(i\), respectively. The pending coordinate remains
	\(p\), because there are no edges between vertices of \(X\); assigning
	value \(2\) to \(x_i\) cannot dominate a pending vertex of \(X\).
	Accordingly, the new state is \((\min\{a,i\},i,p)\), and the weight
	increases by \(2\).
\end{itemize}

For each transition, if more than one partial assignment produces the same
new state, we retain only the one of minimum weight.

\paragraph*{\underline{\textbf{Transitions for an interval vertex of \(Y^I\)}}}

After processing \(x_i\), the vertices of \(Y_i^I\) are processed one at a
time. Let \(y\in Y_i^I\). Since \(r(y)=i\), its neighbourhood is
\(N(y)=\{x_{l(y)},\ldots,x_i\}\). Again, we consider the three possible
values of \(f(y)\).

\begin{itemize}
	\item Suppose that \(f(y)=0\). Then \(y\) must have a neighbour in \(X\)
	assigned value \(2\). Since every neighbour of \(y\) has already been
	processed, this condition can be checked immediately. The rightmost
	processed value-\(2\) vertex of \(X\) is \(x_b\). Therefore, a
	value-\(2\) vertex lies in \(N(y)\) exactly when \(b\geq l(y)\). If \(b\geq l(y)\), the transition is feasible and the state remains \((a,b,p)\); also, the weight does not increase. If \(b<l(y)\), then no value-\(2\) vertex of \(X\) lies in \(N(y)\), so the choice \(f(y)=0\) is infeasible.

	\item Suppose that \(f(y)=1\). The vertex \(y\) is safe by itself, and it
	does not remove any pending obligation. Therefore, the state remains
	\((a,b,p)\), and the weight increases by \(1\).
	
	\item Suppose that \(f(y)=2\). The vertex \(y\) dominates every pending
	vertex in \(N(y)=\{x_{l(y)},\ldots,x_i\}\). If \(p=p_\infty\), then there
	is no pending vertex, so the pending coordinate remains \(p_\infty\). If
	\(p\neq p_\infty\) and \(p\geq l(y)\), then \(x_p\in N(y)\). Since \(p\)
	is the earliest pending index, every other pending vertex has index at
	least \(p\) and at most \(i\), and hence also lies in \(N(y)\).
	Therefore, all pending vertices become dominated and the new pending
	coordinate is \(p_\infty\). If \(p<l(y)\), then \(y\) does not dominate
	the earliest pending vertex, so the pending coordinate remains \(p\).
	The coordinates \(a\) and \(b\) do not change, because they record only
	value-\(2\) vertices belonging to \(X\). This transition increases the
	weight by \(2\).
\end{itemize}

An interval vertex \(y\in Y_i^I\) cannot be dominated by a vertex of \(S\),
because both \(y\) and every vertex of \(S\) belong to the bipartition class
\(Y\). Similarly, a value-\(2\) interval vertex cannot dominate another
vertex of \(Y\). This is why the feasibility of the choice \(f(y)=0\) depends
only on the value-\(2\) vertices of \(X\), as recorded by \(b\).

\paragraph*{\underline{\textbf{Terminal states}}}

After processing \(x_m\) and every vertex of \(Y_m^I\), all vertices of
\(X\cup Y^I\) have been assigned values. We retain only states of the form
\((a,b,p_\infty)\). The condition \(p=p_\infty\) guarantees that no vertex of
\(X\) assigned value \(0\) remains undominated. Hence, every value-\(0\)
vertex of \(X\) is dominated either by a forced value-\(2\) vertex of \(S\)
or by a value-\(2\) interval vertex of \(Y^I\). Every value-\(0\) vertex of
\(Y^I\) was checked for domination at the time it was processed.

\paragraph*{\underline{\textbf{Completing the assignment on \(Y^W\setminus S\)}}}

It remains to assign values to the wrap-around vertices outside \(S\). By the
definition of the current branch, no vertex of \(Y^W\setminus S\) may receive
value \(2\). Consider a final state \((a,b,p_\infty)\) and a vertex
\(y\in Y^W\setminus S\). The vertex \(y\) can receive value \(0\) if and only
if it has a value-\(2\) neighbour in \(X\), which occurs exactly when
\(a\leq r(y)\) or \(b\geq l(y)\). If neither condition holds, then \(y\)
cannot receive value \(0\) and must receive value \(1\).

The choices for the vertices of \(Y^W\setminus S\) are independent once
\(a\) and \(b\) are fixed, because there are no edges inside \(Y\).
Therefore, let \(c(S,a,b)\) be the number of vertices
\(y\in Y^W\setminus S\) satisfying \(a>r(y)\) and \(b<l(y)\). Each vertex counted by \(c(S,a,b)\) receives value \(1\), while every other
vertex of \(Y^W\setminus S\) receives value \(0\).

Consequently, for the branch corresponding to \(S\), we minimize
\(D(a,b,p_\infty)+c(S,a,b)\) over all reachable final states
\((a,b,p_\infty)\), and then add the fixed contribution \(2|S|\). Finally,
the algorithm takes the minimum value over all subsets \(S\subseteq Y^W\)
with \(|S|\leq 2\).

\subsection{Algorithm}

We divide the algorithm into three parts.
Algorithm~\ref{alg:roman-circular-convex} is the main routine,
Algorithm~\ref{alg:process-circular-x} processes a vertex of \(X\), and
Algorithm~\ref{alg:process-circular-y} processes an interval vertex of
\(Y^I\).

As in the convex-bipartite algorithm, \(a_\infty\) and \(p_\infty\) are
special state values, whereas \(+\infty\) denotes an infeasible table entry.
The value \(a_\infty\) means that no processed vertex of \(X\) has been
assigned value \(2\), while \(p_\infty\) means that no processed vertex of
\(X\) is currently pending. Both state values are represented numerically by
\(m+1\).

For a fixed set \(S\subseteq Y^W\), the main routine initializes the dynamic
programming table and processes the vertices in the order
\(x_1,Y_1^I,x_2,Y_2^I,\ldots,x_m,Y_m^I\). The subroutine
\textsc{Process-X} applies the three transitions corresponding to assigning
value \(0\), \(1\), or \(2\) to a vertex \(x_i\), whereas the subroutine
\textsc{Process-Interval-Y} applies the corresponding transitions to an
interval vertex \(y\in Y^I\).

A state is called \emph{finite} if its table entry is smaller than
\(+\infty\); equivalently, the state is realized by at least one feasible
partial assignment. Whenever a vertex is processed, the corresponding
subroutine initializes every entry of a temporary table \(D'\) to
\(+\infty\). The instruction ``update \(D'(a,b,p)\) with \(w\)'' means that
\(D'(a,b,p)\) is replaced by \(\min\{D'(a,b,p),w\}\). Hence, if several
transitions produce the same state, only the minimum resulting weight is
retained. After all transitions for the current vertex have been considered,
the updated table \(D'\) is returned to the calling routine and replaces the
current table \(D\).

After all vertices of \(X\cup Y^I\) have been processed, the main routine
considers only finite final states of the form \((a,b,p_\infty)\). For each
such state, it counts the vertices of \(Y^W\setminus S\) that are not
dominated by any value-\(2\) vertex of \(X\). These vertices must receive
value \(1\), and their number is added to the weight of the corresponding
final state.

\begin{algorithm}[H]
	\small
	\DontPrintSemicolon
	\caption{\textsc{Roman-Circular-Convex-Bipartite}}
	\label{alg:roman-circular-convex}
	\KwIn{A circular-convex bipartite graph \(G=(X\cup Y,E)\) with a
		circular-convex representation on
		\(X=\{x_1,\ldots,x_m\}\).}
	\KwOut{The Roman domination number \(\gamma_R(G)\).}
	
	Cut the circular ordering between \(x_m\) and \(x_1\)\;
	Partition \(Y\) into \(Y^I\) and \(Y^W\)\;
	
	\For{\(i=1\) \KwTo \(m\)}{
		Set \(Y_i^I=\{y\in Y^I:r(y)=i\}\)\;
	}
	
	Let \(a_\infty=p_\infty=m+1\)\;
	Set \(\gamma_R=+\infty\)\;
	
	\ForEach{\(S\subseteq Y^W\) with \(|S|\leq 2\)}{
		\ForEach{\(x_i\in X\)}{
			\eIf{\(N(x_i)\cap S\neq\emptyset\)}{
				Set \(\operatorname{pre}_S(x_i)=0\)\;
			}{
				Set \(\operatorname{pre}_S(x_i)=1\)\;
			}
		}
		
		Set all entries of \(D\) to \(+\infty\)\;
		Set \(D(a_\infty,0,p_\infty)=0\)\;
		
		\For{\(i=1\) \KwTo \(m\)}{
			Set
			\(D=\textsc{Process-X}
			(D,i,\operatorname{pre}_S(x_i))\)\;
			
			\ForEach{\(y\in Y_i^I\), in an arbitrary fixed order}{
				Set
				\(D=\textsc{Process-Interval-Y}(D,y)\)\;
			}
		}
		
		\ForEach{finite final state \((a,b,p_\infty)\) of \(D\)}{
			Set
			\(c=
			|\{y\in Y^W\setminus S:
			a>r(y)\text{ and }b<l(y)\}|\)\;
			
			Set
			\(\gamma_R=
			\min\{\gamma_R,
			D(a,b,p_\infty)+2|S|+c\}\)\;
		}
	}
	
	\Return{\(\gamma_R\)}\;
\end{algorithm}

\begin{algorithm}[H]
	\small
	\DontPrintSemicolon
	\caption{\textsc{Process-X}}
	\label{alg:process-circular-x}
	\KwIn{The current table \(D\), an index \(i\), and
		\(\operatorname{pre}_S(x_i)\).}
	\KwOut{The updated table after processing \(x_i\).}
	
	Set all entries of \(D'\) to \(+\infty\)\;
	
	\ForEach{finite state \((a,b,p)\) of \(D\)}{
		\eIf{\(\operatorname{pre}_S(x_i)=0\)}{
			Set \(p_0=p\)\;
		}{
			Set \(p_0=\min\{p,i\}\)\;
		}
		
		Update \(D'(a,b,p_0)\) with \(D(a,b,p)\)
		\tcp*{\(f(x_i)=0\)}
		
		Update \(D'(a,b,p)\) with \(D(a,b,p)+1\)
		\tcp*{\(f(x_i)=1\)}
		
		Set \(a_2=\min\{a,i\}\)\;
		Update \(D'(a_2,i,p)\) with \(D(a,b,p)+2\)
		\tcp*{\(f(x_i)=2\)}
	}
	
	\Return{\(D'\)}\;
\end{algorithm}

\begin{algorithm}[H]
	\small
	\DontPrintSemicolon
	\caption{\textsc{Process-Interval-Y}}
	\label{alg:process-circular-y}
	\KwIn{The current table \(D\) and an interval vertex \(y\in Y^I\).}
	\KwOut{The updated table after processing \(y\).}
	
	Set \(l=l(y)\)\;
	Set all entries of \(D'\) to \(+\infty\)\;
	
	\ForEach{finite state \((a,b,p)\) of \(D\)}{
		\If{\(b\geq l\)}{
			Update \(D'(a,b,p)\) with \(D(a,b,p)\)
			\tcp*{\(f(y)=0\)}
		}
		
		Update \(D'(a,b,p)\) with \(D(a,b,p)+1\)
		\tcp*{\(f(y)=1\)}
		
		\eIf{\(p\neq p_\infty\) and \(p\geq l\)}{
			Set \(p_2=p_\infty\)\;
		}{
			Set \(p_2=p\)\;
		}
		
		Update \(D'(a,b,p_2)\) with \(D(a,b,p)+2\)
		\tcp*{\(f(y)=2\)}
	}
	
	\Return{\(D'\)}\;
\end{algorithm}

\subsection{Correctness Proof}

We prove that Algorithm~\ref{alg:roman-circular-convex} computes
\(\gamma_R(G)\). The proof has two parts. First, we show that, for every fixed
branch \(S\), the dynamic program correctly computes the best assignment on
\(X\cup Y^I\) consistent with forcing the vertices of \(S\) to receive value
\(2\). We then show that branching over all sets \(S\subseteq Y^W\) with
\(|S|\leq 2\) is sufficient.

Fix a subset \(S\subseteq Y^W\) with \(|S|\leq 2\). In this branch, every
vertex of \(S\) is forced to receive value \(2\). Hence, the vertices of \(X\)
dominated by \(S\) are known before the dynamic program starts, and this
information is stored by \(\operatorname{pre}_S\).

We first consider the dynamic program on \(X\cup Y^I\). If a vertex \(x_i\)
is assigned value \(0\) and \(\operatorname{pre}_S(x_i)=0\), then \(x_i\) is
already dominated by a vertex of \(S\), so it does not become pending. If
\(\operatorname{pre}_S(x_i)=1\), then \(x_i\) is not pre-dominated by \(S\)
and therefore, becomes pending exactly as in the convex-bipartite algorithm.
The interval vertices in \(Y^I\) are processed as in the convex-bipartite
algorithm of Das et al.~\cite{das2026roman}, because their neighbourhoods are
ordinary intervals in the linear order \(x_1,\ldots,x_m\).

The state \((a,b,p)\) stores all the information needed for the remainder of
the branch. The value \(p\) records the earliest processed vertex of \(X\)
that is still pending. This is sufficient for the same reason as in the
convex case: every future interval vertex sees the already processed part of
\(X\) as a suffix. In particular, if a future interval vertex assigned value
\(2\) reaches the earliest pending vertex, then it also reaches every later
pending vertex. The values \(a\) and \(b\) record the leftmost and rightmost
processed vertices of \(X\) assigned value \(2\). These values are sufficient
to decide whether a wrap-around vertex is dominated by a vertex of \(X\)
assigned value \(2\). Indeed, a wrap-around vertex \(y\) with
\(N(y)=\{x_{l(y)},\ldots,x_m\}\cup\{x_1,\ldots,x_{r(y)}\}\) is dominated by
an \(X\)-vertex assigned value \(2\) if and only if
\(a\leq r(y)\) or \(b\geq l(y)\).

An induction on the processing order proves the following invariant. After
each processing step, if there exists a feasible partial assignment with
state \((a,b,p)\), then \(D(a,b,p)\) is the minimum weight among all such
partial assignments. If no feasible partial assignment realizes
\((a,b,p)\), then \(D(a,b,p)=+\infty\).

The invariant holds initially because
\(D(a_\infty,0,p_\infty)=0\) represents the empty assignment and all other states
are infeasible. When \(x_i\) is processed, the three transitions consider all
possible values of \(f(x_i)\) and correctly determine whether \(x_i\) becomes
pending. When an interval vertex \(y\in Y_i^I\) is processed, the three
transitions consider all possible values of \(f(y)\). The choice \(f(y)=0\)
is allowed exactly when \(y\) has a value-\(2\) neighbour in \(X\), while the
choice \(f(y)=2\) removes precisely the pending vertices contained in
\(N(y)\). Therefore, every feasible extension is considered, and every
generated extension is feasible. Taking the minimum over all transitions
preserves the invariant.

After the dynamic program finishes, we accept only states with
\(p=p_\infty\), because no vertex of \(X\) may remain pending in a complete
Roman dominating function. It remains to assign values to the vertices of
\(Y^W\setminus S\). In the branch corresponding to \(S\), these vertices are
not allowed to receive value \(2\). A vertex \(y\in Y^W\setminus S\) can
receive value \(0\) exactly when it is dominated by some vertex of \(X\)
assigned value \(2\). Since \(y\) is a wrap-around vertex, this happens
exactly when \(a\leq r(y)\) or \(b\geq l(y)\). If this condition fails, then
\(y\) must receive value \(1\). Therefore, \(c(S,a,b)\) is exactly the
minimum additional cost required for the wrap-around vertices not in \(S\).

It remains to show that branching over sets \(S\) with \(|S|\leq 2\) is
sufficient. By Lemma~\ref{lem:circular-two-wrap}, there exists an optimal
Roman dominating function in which at most two wrap-around vertices are
assigned value \(2\). Let \(S\) be precisely this set of wrap-around
vertices. The algorithm considers this choice of \(S\). For this branch, the
dynamic program computes the best possible assignment on \(X\cup Y^I\), and
the final counting step assigns the best possible values to the remaining
wrap-around vertices. Hence, the algorithm obtains a solution of weight at
most \(\gamma_R(G)\).

Conversely, every solution constructed by the algorithm is a valid Roman
dominating function of \(G\). The vertices in \(S\) receive value \(2\). The
vertices in \(X\cup Y^I\) are handled by the dynamic program, and no vertex
of \(X\) remains pending in an accepted final state. Finally, every vertex of
\(Y^W\setminus S\) either receives value \(0\) because it is dominated by an
\(X\)-vertex assigned value \(2\), or receives value \(1\). Therefore, the
algorithm never underestimates \(\gamma_R(G)\).

Thus, the minimum value returned over all choices of \(S\) is exactly
\(\gamma_R(G)\).

\subsection{Running Time Analysis}

\begin{theorem}
	\label{thm:circular-convex-rdp}
	Given a circular-convex representation of a circular-convex bipartite graph
	\(G\) on \(n\) vertices, Algorithm~\ref{alg:roman-circular-convex} computes
	\(\gamma_R(G)\) in \(O(n^6)\) time.
\end{theorem}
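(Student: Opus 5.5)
Correctness has already been argued in the preceding subsection (via Lemma~\ref{lem:circular-two-wrap} and the invariant for the table \(D\)), so the theorem reduces to a running-time count. The plan is to bound three quantities separately and multiply them: the number of branches \(S\), the cost of one run of the dynamic program for a fixed \(S\), and the cost of the final completion step on \(Y^W\setminus S\).

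\textbf{Number of branches.} Since \(|Y^W|\le n\), the number of subsets \(S\subseteq Y^W\) with \(|S|\le 2\) is at most \(1+n+\binom{n}{2}=O(n^2)\).

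\textbf{Preprocessing and table size.} The preprocessing per branch (cutting the circle, classifying \(Y\) into \(Y^I\) and \(Y^W\), bucketing \(Y_i^I\), and computing \(\operatorname{pre}_S\)) takes \(O(n)\) time per branch, or \(O(n^2)\) overall. Each coordinate of a state \((a,b,p)\) ranges over \(\{0,\ldots,m+1\}\), so the table has \(O(m^3)=O(n^3)\) entries.

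\textbf{Dynamic program per branch.} Each call of \textsc{Process-X} or \textsc{Process-Interval-Y} initializes \(D'\) in \(O(n^3)\) time. It then performs a constant number of updates per finite state, each in \(O(1)\) time, for \(O(n^3)\) time per processed vertex. The processing order visits each vertex of \(X\cup Y^I\) exactly once, giving \(O(n)\) calls and \(O(n^4)\) time per branch.

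\textbf{Completion step per branch.} There are \(O(n^2)\) final states \((a,b,p_\infty)\). Computing \(c(S,a,b)\) naively by scanning \(Y^W\setminus S\) costs \(O(n)\), so this step takes \(O(n^3)\) and is dominated by the dynamic program.

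\textbf{Total.} Multiplying gives \(O(n^2)\cdot O(n^4)=O(n^6)\).

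The one point needing care is the table-reset cost. If \(D'\) were indexed more crudely, or the table were rescanned per transition, the per-vertex cost could exceed \(O(n^3)\). I would make explicit that initialization and iteration over finite states are both linear in the table size, \(O(n^3)\), so the per-branch bound stays at \(O(n^4)\).
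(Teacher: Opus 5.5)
Your proposal is correct and follows the paper's proof essentially step for step: \(O(n^2)\) branches \(S\), \(O(m^3)\) states with constant work per state for each of the at most \(n\) processed vertices of \(X\cup Y^I\) (so \(O(nm^3)\) per branch), and an \(O(|Y^W|m^2)\) completion step, giving \(O(n^6)\), with correctness deferred to Lemma~\ref{lem:circular-two-wrap} and the table invariant exactly as in the paper. Your explicit note that resetting the temporary table \(D'\) costs \(O(m^3)\) per processed vertex is a small clarification that the paper leaves implicit, and it fits within the same per-branch bound.
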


\begin{proof}
	Let \(m=|X|\) and \(q=|Y|\). There are at most
	\(1+|Y^W|+\binom{|Y^W|}{2}=O(n^2)\) choices for the set \(S\).
	
	For a fixed set \(S\), the dynamic program has states of the form
	\((a,b,p)\). The coordinate \(a\) has \(m+1\) possible values, namely
	\(\{1,\ldots,m\}\cup\{a_\infty\}\). The coordinate \(b\) has \(m+1\) possible
	values, namely \(\{0,1,\ldots,m\}\). The coordinate \(p\) has \(m+1\)
	possible values, namely \(\{1,\ldots,m\}\cup\{p_\infty\}\). Hence, the number
	of states is \(O(m^3)\).
	
	Each vertex of \(X\cup Y^I\) is processed once, and for every reachable state
	the algorithm considers a constant number of transitions. Each transition
	takes constant time. Therefore, for one fixed set \(S\), the dynamic
	programming part takes \(O((m+|Y^I|)m^3)\) time, which is \(O(nm^3)\).
	
	For each reachable final state \((a,b,p_\infty)\), the additional cost for the
	vertices of \(Y^W\setminus S\) can be computed directly in
	\(O(|Y^W|)\) time. There are at most \(O(m^2)\) final states with
	\(p=p_\infty\). Thus, the final counting step takes
	\(O(|Y^W|m^2)\) time for one fixed set \(S\), which is dominated by
	\(O(nm^3)\). Therefore, one branch takes \(O(nm^3)\) time.
	
	Since there are \(O(n^2)\) choices for \(S\) and \(m\leq n\), the total
	running time is \(O(n^2\cdot nm^3)=O(n^6)\).
\end{proof}

\begin{remark}
	The assumption that a circular-convex representation is given is important.
	The algorithm uses the circular ordering of \(X\) and the endpoints \(l(y)\)
	and \(r(y)\) of each circular interval. Recognition and construction of such
	a representation are separate preprocessing issues and are not part of the
	dynamic program described here.
\end{remark}

\subsection{Example}

We illustrate Algorithm~\textsc{Roman-Circular-Convex-Bipartite} on a small
circular-convex bipartite graph. Let
\(X=\{x_1,x_2,x_3,x_4,x_5\}\), placed in the circular order
\(x_1,x_2,x_3,x_4,x_5,x_1\), and let \(Y=\{y_1,y_2,y_3\}\). Suppose that
\(N(y_1)=\{x_5,x_1,x_2\}\),
\(N(y_2)=\{x_4,x_5,x_1\}\), and
\(N(y_3)=\{x_2,x_3,x_4\}\). Thus, \(y_1\) and \(y_2\) are wrap-around
vertices, while \(y_3\) is an interval vertex after cutting the circle
between \(x_5\) and \(x_1\). More precisely, after the cut we have
\(l(y_1)=5\), \(r(y_1)=2\), \(l(y_2)=4\), \(r(y_2)=1\),
\(l(y_3)=2\), and \(r(y_3)=4\). Hence,
\(Y^W=\{y_1,y_2\}\), \(Y^I=\{y_3\}\), and \(Y_4^I=\{y_3\}\), while all
other sets \(Y_i^I\) are empty. The graph and the Roman dominating function
obtained by the algorithm are shown in
Figure~\ref{fig:roman-circular-convex-example}.

\begin{figure}[h]
	\centering
		\begin{tikzpicture}[
				scale=1.10,
			xnode/.style={
				circle,
				draw,
				minimum size=7mm,
				inner sep=0pt
			},
			ynode/.style={
				circle,
				draw,
				minimum size=7mm,
				inner sep=0pt
			},
			opt2/.style={
				circle,
				draw,
				fill=gray!30,
				very thick,
				minimum size=7mm,
				inner sep=0pt
			},
			one/.style={
				circle,
				draw,
				minimum size=7mm,
				inner sep=0pt,
				very thick
			},
			every node/.style={font=\small}
			]
			\node[xnode] (x1) at (0,2.0) {\(x_1\)};
			\node[xnode] (x2) at (1.5,0.6) {\(x_2\)};
			\node[one]   (x3) at (1.0,-1.6) {\(x_3\)};
			\node[opt2]  (x4) at (-1.0,-1.6) {\(x_4\)};
			\node[xnode] (x5) at (-1.5,0.6) {\(x_5\)};
			
			\node[opt2]  (y1) at (0,3.5) {\(y_1\)};
			\node[ynode] (y2) at (-3.4,-0.1) {\(y_2\)};
			\node[ynode] (y3) at (3.4,-0.1) {\(y_3\)};
			
			\draw[dashed] (x1)--(x2)--(x3)--(x4)--(x5)--(x1);
			
			\draw (y1)--(x5);
			\draw (y1)--(x1);
			\draw (y1)--(x2);
			
			\draw (y2)--(x4);
			\draw (y2)--(x5);
			\draw (y2)--(x1);
			
			\draw (y3)--(x2);
			\draw (y3)--(x3);
			\draw (y3)--(x4);
			
			\node at (0,-2.35) {circular order on \(X\)};
			\node[right] at (1.55,-1.6) {\(1\)};
			\node[left] at (-1.55,-1.6) {\(2\)};
			\node[above] at (0,3.85) {\(2\)};
		\end{tikzpicture}%
	
	\caption{A circular-convex bipartite graph used in the example.
		The shaded vertices receive value \(2\), namely \(f(y_1)=2\) and
		\(f(x_4)=2\), and \(x_3\) receives value \(1\). All other vertices
		receive value \(0\).}
	\label{fig:roman-circular-convex-example}
\end{figure}

We now trace the optimal branch illustrated in
Figure~\ref{fig:roman-circular-convex-example}. Consider the branch
\(S=\{y_1\}\). Thus, \(y_1\) is forced to receive value \(2\), and it
contributes \(2\) to the total weight. Since
\(N(y_1)=\{x_5,x_1,x_2\}\), the pre-dominated vertices of \(X\) are
\(x_1\), \(x_2\), and \(x_5\). Therefore,
\(\operatorname{pre}_S(x_1)=
\operatorname{pre}_S(x_2)=
\operatorname{pre}_S(x_5)=0\), while
\(\operatorname{pre}_S(x_3)=
\operatorname{pre}_S(x_4)=1\).

For this fixed branch, the dynamic program is run on \(X\cup Y^I\). Since
\(Y^I=\{y_3\}\) and \(r(y_3)=4\), the processing order is
\(x_1,x_2,x_3,x_4,y_3,x_5\). The state is \((a,b,p)\), where \(a\) is the
leftmost processed vertex of \(X\) assigned value \(2\), \(b\) is the
rightmost processed vertex of \(X\) assigned value \(2\), and \(p\) is the
earliest pending vertex of \(X\). Initially, we have
\(D(a_\infty,0,p_\infty)=0\).

The following table traces one optimal branch of the dynamic program for
\(S=\{y_1\}\).

\begin{center}
	\large
	\begin{tabular}{c|c|c}
		Step & Chosen value & Resulting state and value \\
		\hline
		Initial
		& --
		& \(D(a_\infty,0,p_\infty)=0\) \\
		
		After \(x_1\)
		& \(f(x_1)=0\), \(\operatorname{pre}_S(x_1)=0\)
		& \(D(a_\infty,0,p_\infty)=0\) \\
		
		After \(x_2\)
		& \(f(x_2)=0\), \(\operatorname{pre}_S(x_2)=0\)
		& \(D(a_\infty,0,p_\infty)=0\) \\
		
		After \(x_3\)
		& \(f(x_3)=1\)
		& \(D(a_\infty,0,p_\infty)=1\) \\
		
		After \(x_4\)
		& \(f(x_4)=2\)
		& \(D(4,4,p_\infty)=3\) \\
		
		After \(y_3\)
		& \(f(y_3)=0\)
		& \(D(4,4,p_\infty)=3\) \\
		
		After \(x_5\)
		& \(f(x_5)=0\), \(\operatorname{pre}_S(x_5)=0\)
		& \(D(4,4,p_\infty)=3\)
	\end{tabular}
\end{center}

We explain these transitions in more detail. Since \(x_1\) and \(x_2\) are
already dominated by \(y_1\), assigning value \(0\) to them does not create a
pending vertex. Thus, the state remains \((a_\infty,0,p_\infty)\) after
processing \(x_1\) and \(x_2\). The vertex \(x_3\) is not pre-dominated by
\(S\), but we assign \(f(x_3)=1\). Hence, it is safe by itself, and the state
still has no pending vertex; the weight becomes \(1\).

Next, we assign \(f(x_4)=2\). This makes \(x_4\) both the leftmost and the
rightmost processed vertex of \(X\) assigned value \(2\). Therefore, the
first two coordinates become \(a=4\) and \(b=4\), and the current table entry
is \(D(4,4,p_\infty)=3\).

Now the interval vertex \(y_3\) is processed. Since
\(N(y_3)=\{x_2,x_3,x_4\}\), we have \(l(y_3)=2\). The current rightmost
value-\(2\) vertex of \(X\) is \(x_4\), so \(b=4\geq l(y_3)\). Hence,
\(y_3\) is dominated by \(x_4\), and assigning \(f(y_3)=0\) is allowed. The
state remains \((4,4,p_\infty)\), and the weight remains \(3\). Finally,
\(x_5\) is already dominated by \(y_1\), since
\(\operatorname{pre}_S(x_5)=0\). Therefore, assigning \(f(x_5)=0\) does not
create a pending vertex, and the state remains \((4,4,p_\infty)\).

After the dynamic program finishes for this branch, the final state
\((4,4,p_\infty)\) is accepted because \(p=p_\infty\). We now handle the
wrap-around vertices in \(Y^W\setminus S\). Here,
\(Y^W\setminus S=\{y_2\}\). The vertex \(y_2\) has
\(N(y_2)=\{x_4,x_5,x_1\}\), so \(l(y_2)=4\) and \(r(y_2)=1\). Since the
final state has \(a=4\) and \(b=4\), the condition
\(a\leq r(y_2)\) or \(b\geq l(y_2)\) becomes
\(4\leq 1\) or \(4\geq 4\). The second inequality holds, so \(y_2\) is
dominated by \(x_4\). Hence, \(y_2\) can receive value \(0\), and the
additional cost is \(c(S,4,4)=0\).

The total weight obtained in this branch is
\(D(4,4,p_\infty)+2|S|+c(S,4,4)=3+2+0=5\). Thus, the algorithm finds the Roman
dominating function shown in
Figure~\ref{fig:roman-circular-convex-example}, namely
\(f(y_1)=2\), \(f(x_4)=2\), \(f(x_3)=1\), and all other vertices receive
value \(0\).

This value is optimal. Suppose, for a contradiction, that there exists a
Roman dominating function \(f\) of weight at most \(4\). First, \(f\) cannot
assign value \(2\) to no vertex. Indeed, in that case no vertex could receive
value \(0\), because every vertex assigned value \(0\) must have a neighbour
assigned value \(2\). Hence, all eight vertices would have to receive value
\(1\), giving weight \(8\).

Suppose next that exactly one vertex is assigned value \(2\). If this vertex
lies in \(X\), then every other vertex of \(X\) must receive a positive value,
because there are no edges inside \(X\). Thus, the total weight is at least
\(2+4=6\).

If the unique value-\(2\) vertex lies in \(Y\), then the other two vertices
of \(Y\) must receive positive values, because there is no value-\(2\) vertex
in \(X\). Moreover, every vertex of \(X\) outside the neighbourhood of the
unique value-\(2\) vertex must receive a positive value. Since every vertex
of \(Y\) has exactly three neighbours in \(X\), two vertices of \(X\) lie
outside its neighbourhood. Consequently, the total weight is at least
\(2+2+2=6\). Therefore, a Roman dominating function of weight at most \(4\)
cannot have exactly one vertex assigned value \(2\).

It follows that a Roman dominating function of weight at most \(4\) would
have exactly two vertices assigned value \(2\), and every other vertex would
be assigned value \(0\). If both value-\(2\) vertices lie in \(X\), then the
remaining vertices of \(X\) are assigned value \(0\) but have no value-\(2\)
neighbour. If both value-\(2\) vertices lie in \(Y\), then the remaining
vertex of \(Y\) is assigned value \(0\) but has no value-\(2\) neighbour.

Therefore, one value-\(2\) vertex must lie in \(X\) and the other must lie in
\(Y\). However, each vertex of \(Y\) misses two vertices of \(X\). At most
one of these two missed vertices can be the selected value-\(2\) vertex of
\(X\). Hence, at least one vertex of \(X\) that is not adjacent to the
selected value-\(2\) vertex of \(Y\) is assigned value \(0\). This vertex
cannot be dominated by the selected value-\(2\) vertex of \(X\), because
there are no edges inside \(X\). This is a contradiction.

Therefore, no Roman dominating function of weight at most \(4\) exists, and
the optimum value is \(\gamma_R(G)=5\).

\section{Roman Domination on Triad-Convex Bipartite Graphs}
\label{sec:rd-triad}

In this section, we extend the dynamic programming approach for Roman
Domination on convex bipartite graphs due to Das et
al.~\cite{das2026roman} to triad-convex bipartite graphs. Let
\(G=(X\cup Y,E)\) be a bipartite graph. We assume that a triad-convex
representation of \(G\) is given on the partite set \(X\). Thus, there is a
tree \(T\) with vertex set \(X\), where \(T\) is a subdivision of
\(K_{1,3}\), and, for every vertex \(y\in Y\), the set \(N(y)\) induces a
connected subtree of \(T\). Triad-convex bipartite graphs form a restricted
subclass of tree-convex bipartite
graphs.

Let \(c\) be the unique branching vertex of the triad. Removing \(c\) from
\(T\) gives three paths, called the \emph{arms} of the triad. We denote them
by \(P_1,P_2,P_3\), where
\(P_k=\{x_{k,1},x_{k,2},\ldots,x_{k,m_k}\}\) is ordered from the center
outward. Thus, \(x_{k,1}\) is the vertex closest to \(c\) on the \(k\)-th
arm.

A vertex \(y\in Y\) is called an \emph{arm vertex} if \(N(y)\) is contained
in one arm and does not contain \(c\). In this case, \(N(y)\) is an interval
on that arm. A vertex \(y\in Y\) is called a \emph{central \(Y\)-vertex} if
\(c\in N(y)\). Since \(N(y)\) is connected in the triad, every central
\(Y\)-vertex has neighbourhood
\(N(y)=\{c\}\cup\bigcup_{k=1}^{3}
\{x_{k,1},x_{k,2},\ldots,x_{k,r_k(y)}\}\), where
\(r_k(y)\in\{0,1,\ldots,m_k\}\). If \(r_k(y)=0\), then \(y\) has no
neighbour on the \(k\)-th arm. Let \(Y^C\) be the set of central
\(Y\)-vertices, and let \(Y_k\) be the set of arm vertices whose
neighbourhoods lie on \(P_k\).

For each \(k\in\{1,2,3\}\), the subgraph induced by \(P_k\cup Y_k\) is a
convex bipartite graph with respect to the linear order
\(x_{k,1},x_{k,2},\ldots,x_{k,m_k}\). Therefore, the convex-bipartite
dynamic programming algorithm of Das et al.~\cite{das2026roman} can be used
as the base routine on each arm. The difficulty comes from the central
\(Y\)-vertices, because they may interact with more than one arm. We handle
them by branching over the central \(Y\)-vertices that receive Roman value
\(2\).

\subsection{A Structural Observation}

We first show that, in some optimal Roman dominating function, only a
constant number of central \(Y\)-vertices need to receive value \(2\).

\begin{lemma}
	\label{lem:triad-three-central}
	There exists an optimal Roman dominating function in which at most three
	vertices of \(Y^C\) are assigned value \(2\).
\end{lemma}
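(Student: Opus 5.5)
We mirror the exchange argument of Lemma~\ref{lem:circular-two-wrap}. Fix an optimal Roman dominating function \(f\) and let \(S_f=\{y\in Y^C: f(y)=2\}\). If \(|S_f|\le 3\) we are done, so assume \(|S_f|\ge 4\). Every \(y\in S_f\) has the form
\[
N(y)=\{c\}\cup\bigcup_{k=1}^{3}\{x_{k,1},\ldots,x_{k,r_k(y)}\}.
\]
Thus \(N(y)\) is determined by the triple \((r_1(y),r_2(y),r_3(y))\), and the \(k\)-th arm part of \(N(y)\) is a prefix of \(P_k\) starting at the center.

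For each \(k\in\{1,2,3\}\), choose \(y_k\in S_f\) maximizing \(r_k(y)\); ties are broken arbitrarily, and the \(y_k\) need not be distinct. We claim that
\[
\bigcup_{y\in S_f}N(y)=N(y_1)\cup N(y_2)\cup N(y_3).
\]
The inclusion \(\supseteq\) is trivial. For \(\subseteq\), take \(x\in N(y)\) with \(y\in S_f\). If \(x=c\), then \(x\in N(y_1)\), since every central \(Y\)-vertex is adjacent to \(c\). Otherwise \(x=x_{k,j}\) for some \(k\) and some \(j\le r_k(y)\le r_k(y_k)\), so \(x\in N(y_k)\), because arm neighbourhoods are nested prefixes. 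This nesting of prefixes along each arm is the key structural fact, and it is exactly what the connectivity of \(N(y)\) in the triad gives us.

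Next, define \(g\) from \(f\) by changing every vertex of \(S_f\setminus\{y_1,y_2,y_3\}\) from value \(2\) to value \(1\). This set is nonempty because \(|S_f|\ge 4\). The changed vertices now have value \(1\) and need no domination. A vertex with value \(0\) in \(g\) also has value \(0\) in \(f\), so it had a value-\(2\) neighbour under \(f\); we check that it still has one under \(g\). For vertices of \(Y\) there is nothing to check, since \(Y\) is independent and vertices of \(Y\) are only dominated by value-\(2\) vertices of \(X\), which are unchanged. For a vertex \(x\in X\): if its value-\(2\) neighbour under \(f\) lay in \(S_f\), then by the claim \(x\) is adjacent to some \(y_k\), which still has value \(2\); if that neighbour lay in \(Y\setminus Y^C\), it is unchanged. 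Hence \(g\) is a Roman dominating function with \(w(g)<w(f)\), contradicting optimality. The same argument also gives a direct construction of an optimal function with at most three such vertices.

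The only step needing care is the covering claim. It requires that each central neighbourhood restricted to an arm is a prefix from the center, not an arbitrary interval, together with the observation that \(c\) itself is covered by any \(y_k\). Both facts follow from \(N(y)\) inducing a connected subtree that contains \(c\).
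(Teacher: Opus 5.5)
Your proof is correct and is essentially the paper's argument: for each arm you pick a vertex of \(S_f\) with maximum reach \(r_k\), observe that these at most three vertices dominate every \(X\)-vertex dominated by \(S_f\) (including \(c\)), and change the remaining vertices of \(S_f\) from \(2\) to \(1\), contradicting optimality. The only difference is cosmetic: you always choose a representative for each arm, even when the maximum reach is \(0\), which avoids the paper's separate case where \(S'\) would otherwise be empty.
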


\begin{proof}
	Let \(f\) be an optimal Roman dominating function, and let
	\(S_f=\{y\in Y^C:f(y)=2\}\). If \(|S_f|\leq 3\), then there is nothing to
	prove. Suppose that \(|S_f|>3\).
	
	For each arm \(P_k\), choose a vertex \(y_k\in S_f\) such that \(r_k(y_k)\)
	is maximum among all vertices of \(S_f\). If this maximum value is \(0\),
	then no vertex of \(S_f\) dominates a vertex on \(P_k\), and no
	representative is needed for that arm. Let \(S'\) be the set of
	representatives chosen in this way. If \(S'=\emptyset\), then every vertex
	of \(S_f\) has neighbourhood exactly \(\{c\}\); in this case, add one
	arbitrary vertex of \(S_f\) to \(S'\). In all cases, \(|S'|\leq 3\).
	
	We claim that the vertices of \(S'\) dominate every vertex of \(X\) that is
	dominated by the vertices of \(S_f\). The center \(c\) is dominated by every
	vertex of \(S'\), because every vertex of \(S'\) is central. If \(S'\)
	consists of one arbitrary vertex selected because all arm reaches are zero,
	then \(c\) is the only vertex of \(X\) dominated by \(S_f\). Otherwise,
	consider an arm \(P_k\). If some vertex of \(S_f\) dominates a vertex on
	\(P_k\), then the representative \(y_k\) was selected for that arm. By the
	choice of \(y_k\), the value \(r_k(y_k)\) is at least the reach of every
	vertex of \(S_f\) on \(P_k\). Hence, \(y_k\) dominates every vertex of
	\(P_k\) that was dominated by a vertex of \(S_f\). Therefore, the vertices
	of \(S'\) dominate all vertices of \(X\) that were dominated by \(S_f\).
	
	Now change the value of every vertex in \(S_f\setminus S'\) from \(2\) to
	\(1\). These vertices remain safe, because a vertex assigned value \(1\)
	does not need to be dominated. No vertex of \(X\) loses domination, because
	\(S'\) dominates every vertex of \(X\) that was dominated by \(S_f\).
	Moreover, vertices of \(Y\) do not dominate other vertices of \(Y\), since
	\(G\) is bipartite with bipartition \((X,Y)\). Thus, the modified function
	is still a Roman dominating function, but it has smaller weight,
	contradicting the optimality of \(f\). Therefore, some optimal Roman
	dominating function assigns value \(2\) to at most three central
	\(Y\)-vertices.
\end{proof}

\subsection{Arm Dynamic Program for a Fixed Central Choice}

By Lemma~\ref{lem:triad-three-central}, there exists an optimal Roman
dominating function in which at most three vertices of \(Y^C\) receive value
\(2\). We therefore, consider every subset \(S\subseteq Y^C\) with
\(|S|\leq 3\). In the branch corresponding to \(S\), every vertex of \(S\)
is forced to receive value \(2\), while every vertex of \(Y^C\setminus S\)
is forbidden from receiving value \(2\). We also branch over the value
assigned to the center \(c\). Let \(f(c)=\alpha\), where
\(\alpha\in\{0,1,2\}\).

Fix a set \(S\subseteq Y^C\) with \(|S|\leq 3\) and a value
\(\alpha\in\{0,1,2\}\). If \(\alpha=0\) and \(S=\emptyset\), then the branch
is infeasible. Indeed, the only neighbours of \(c\) belong to \(Y^C\), and
no such neighbour receives value \(2\) in this branch. In every other case,
the fixed vertices contribute \(2|S|+\alpha\) to the weight. This
contribution is not included in the arm tables and is added only when the
three arm solutions are combined.

\paragraph*{\underline{\textbf{Pre-dominated vertices on an arm}}}

The vertices of \(S\) may already dominate initial portions of the arms. For
each \(k\in\{1,2,3\}\) and each vertex \(x_{k,i}\in P_k\), define
\(\operatorname{pre}_S(x_{k,i})=0\) if \(x_{k,i}\) is dominated by a vertex
of \(S\), and define \(\operatorname{pre}_S(x_{k,i})=1\) otherwise. Thus,
\(\operatorname{pre}_S(x_{k,i})=0\) exactly when there exists a vertex
\(y\in S\) such that \(i\leq r_k(y)\).

Under this convention, the value \(0\) means that no new domination
obligation is needed if \(x_{k,i}\) is assigned value \(0\). By contrast,
\(\operatorname{pre}_S(x_{k,i})=1\) means that \(x_{k,i}\) is not dominated
from the central part. If such a vertex receives value \(0\), then it must
eventually be dominated by an arm vertex of \(Y_k\) assigned value \(2\).

\paragraph*{\underline{\textbf{Processing order on an arm}}}

Fix an arm \(P_k\). Every vertex \(y\in Y_k\) has an interval neighbourhood
on \(P_k\), and we write
\(N(y)=\{x_{k,l(y)},x_{k,l(y)+1},\ldots,x_{k,r(y)}\}\). For each
\(i\in\{1,\ldots,m_k\}\), let
\(Y_{k,i}=\{y\in Y_k:r(y)=i\}\). The vertices of \(P_k\cup Y_k\) are
processed in the order
\(x_{k,1},Y_{k,1},x_{k,2},Y_{k,2},\ldots,x_{k,m_k},Y_{k,m_k}\).
The vertices inside each set \(Y_{k,i}\) may be processed in any fixed order.

When a vertex \(y\in Y_{k,i}\) is processed, every vertex of \(N(y)\) has
already been processed, because
\(N(y)=\{x_{k,l(y)},\ldots,x_{k,i}\}\). Consequently, whether \(y\) can
receive value \(0\) can be determined from the value-\(2\) vertices of
\(P_k\) already recorded in the state. If \(y\) receives value \(2\), then
all processed arm vertices that it dominates are also known at that moment.

\paragraph*{\underline{\textbf{Pending vertices}}}

Suppose that a processed vertex \(x_{k,j}\) is assigned value \(0\). If
\(\operatorname{pre}_S(x_{k,j})=0\), then it is already dominated by a
vertex of \(S\). Otherwise, it must be dominated by a processed or future
vertex of \(Y_k\) assigned value \(2\). Until such a vertex is selected, we
call \(x_{k,j}\) \emph{pending}.

As in the convex-bipartite algorithm, it is sufficient to remember only the
earliest pending vertex on the arm. Suppose that \(x_{k,p}\) is the earliest
pending vertex. Every other pending vertex has an index at least \(p\). A future
arm vertex \(y\), when processed, has an interval neighbourhood whose
processed part is a suffix ending at its right endpoint. If \(l(y)\leq p\),
then \(y\) contains \(x_{k,p}\) and every later pending vertex, so assigning
value \(2\) to \(y\) clears the entire pending set. If \(l(y)>p\), then
\(x_{k,p}\notin N(y)\), and the earliest pending vertex remains pending even
if \(y\) dominates some later pending vertices. Hence, the minimum pending
index contains all information needed for future transitions.

\paragraph*{\underline{\textbf{State definition and sentinel values}}}

For each fixed arm \(P_k\), we distinguish state sentinels from the numerical
value \(+\infty\) used for an infeasible table entry. Let
\(a_{\infty,k}=m_k+1\) and \(p_{\infty,k}=m_k+1\). The symbol
\(a_{\infty,k}\) means that no processed vertex of \(P_k\) has been assigned
value \(2\), while \(p_{\infty,k}\) means that no processed vertex of \(P_k\)
is pending. Although the two sentinels are represented numerically by the
same integer, their different symbols make their roles explicit. The value
\(+\infty\) is reserved only for an infeasible dynamic programming entry.

The state on arm \(P_k\) is a triple \((a,b,p)\). Its coordinates have the
following meanings.

\begin{itemize}
	\item The value \(a\) is the smallest index \(i\) such that a processed
	vertex \(x_{k,i}\) has been assigned value \(2\). If no such vertex
	exists, then \(a=a_{\infty,k}\).
	
	\item The value \(b\) is the largest index \(i\) such that a processed
	vertex \(x_{k,i}\) has been assigned value \(2\). If no such vertex
	exists, then \(b=0\).
	
	\item The value \(p\) is the smallest index of a processed vertex
	\(x_{k,p}\) such that \(f(x_{k,p})=0\),
	\(\operatorname{pre}_S(x_{k,p})=1\), and \(x_{k,p}\) has not yet been
	dominated by a processed vertex of \(Y_k\) assigned value \(2\). If no
	such vertex exists, then \(p=p_{\infty,k}\).
\end{itemize}

The table entry \(D_k(a,b,p)\) stores the minimum weight of a partial
assignment on the processed vertices of \(P_k\cup Y_k\) that realizes
\((a,b,p)\). The fixed contribution \(2|S|+\alpha\) is not included in
\(D_k\). An entry has value \(+\infty\) if no feasible partial assignment
realizes the corresponding state.

At every stage of the arm computation, the table satisfies the following
conditions.

\begin{itemize}
	\item Every processed vertex of \(Y_k\) assigned value \(0\) has a
	processed neighbour in \(P_k\) assigned value \(2\).
	
	\item Every processed vertex of \(P_k\) assigned value \(0\) is already
	dominated by a vertex of \(S\), dominated by a processed vertex of
	\(Y_k\) assigned value \(2\), or represented by the pending coordinate
	\(p\).
	
	\item The coordinates \(a\) and \(b\) are respectively the smallest and
	largest indices of the processed vertices of \(P_k\) assigned value
	\(2\).
\end{itemize}

Before any vertex of the arm is processed, the only feasible state is
\((a_{\infty,k},0,p_{\infty,k})\), with value \(0\). Thus, we initialize
\(D_k(a_{\infty,k},0,p_{\infty,k})=0\) and set every other table entry to
\(+\infty\).

\paragraph*{\underline{\textbf{Why both \texorpdfstring{\(a\)} and \texorpdfstring{\(b\)} are stored}}}

The coordinate \(b\) is needed while processing the arm vertices in \(Y_k\).
When \(y\in Y_{k,i}\) is assigned value \(0\), it must have a processed
value-\(2\) neighbour in
\(N(y)=\{x_{k,l(y)},\ldots,x_{k,i}\}\). Such a neighbour exists exactly when
\(b\geq l(y)\).

The coordinate \(a\) is needed after the arm computation. Since the arm is
ordered from the center outward, \(a\) records the value-\(2\) vertex of the
arm closest to the center. A central \(Y\)-vertex \(y\) reaches the prefix
\(\{x_{k,1},\ldots,x_{k,r_k(y)}\}\) of \(P_k\). Therefore, \(y\) has a
value-\(2\) neighbour on \(P_k\) exactly when \(a\leq r_k(y)\). The exact
set of value-\(2\) vertices on the arm is not needed: the rightmost one,
recorded by \(b\), is sufficient for interval vertices, and the closest one,
recorded by \(a\), is sufficient for central vertices.

\paragraph*{\underline{\textbf{Transitions for a vertex of an arm}}}

Suppose that the current arm state is \((a,b,p)\), and that \(x_{k,i}\) is
the next vertex to be processed. We consider the three possible values of
\(f(x_{k,i})\).

\begin{enumerate}
	\item Suppose that \(f(x_{k,i})=0\). If
	\(\operatorname{pre}_S(x_{k,i})=0\), then \(x_{k,i}\) is already
	dominated by a vertex of \(S\), so the pending coordinate remains \(p\).
	If \(\operatorname{pre}_S(x_{k,i})=1\), then \(x_{k,i}\) must become
	pending, and the new pending coordinate is \(\min\{p,i\}\). This
	transition adds no weight.
	
	\item Suppose that \(f(x_{k,i})=1\). The vertex is safe by itself and
	does not affect any domination obligation. The state remains
	\((a,b,p)\), and the weight increases by \(1\).
	
	\item Suppose that \(f(x_{k,i})=2\). Since the vertices of \(P_k\) are
	processed in increasing order, \(x_{k,i}\) becomes the farthest
	processed value-\(2\) vertex from the center. The new state is
	\((\min\{a,i\},i,p)\), and the weight increases by \(2\). The pending
	coordinate does not change, because there are no edges between vertices
	of \(X\).
\end{enumerate}

Whenever more than one transition produces the same state, only the minimum
resulting weight is retained.

\paragraph*{\underline{\textbf{Transitions for an arm vertex of \texorpdfstring{\(Y_k\)}}}}

After processing \(x_{k,i}\), the vertices of \(Y_{k,i}\) are processed one
at a time. Let \(y\in Y_{k,i}\), \(N(y)=\{x_{k,l(y)},\ldots,x_{k,i}\}\). Again, we consider the three possible
values of \(f(y)\).

\begin{enumerate}
	\item Suppose that \(f(y)=0\). Since all neighbours of \(y\) have already
	been processed, this choice is feasible exactly when \(y\) already has a
	value-\(2\) neighbour on the arm. The farthest processed value-\(2\)
	vertex has index \(b\), so such a neighbour exists exactly when
	\(b\geq l(y)\). If this condition holds, the state does not change, and
	the weight does not increase. Otherwise, the choice is infeasible.
	
	\item Suppose that \(f(y)=1\). The vertex \(y\) is safe by itself, and it
	does not remove a pending obligation. Hence, the state remains
	\((a,b,p)\), and the weight increases by \(1\).
	
	\item Suppose that \(f(y)=2\). The vertex \(y\) dominates every pending
	vertex in its interval. If \(p\neq p_{\infty,k}\) and \(p\geq l(y)\),
	then the earliest pending vertex belongs to \(N(y)\). Every later pending
	vertex also lies in \(N(y)\), so the new pending coordinate becomes
	\(p_{\infty,k}\). If \(p=p_{\infty,k}\) or \(p<l(y)\), then the pending
	coordinate remains \(p\). The values \(a\) and \(b\) do not change,
	because they record only value-\(2\) vertices of \(P_k\). This
	transition increases the weight by \(2\).
\end{enumerate}

An arm vertex \(y\in Y_k\) cannot be dominated by a central vertex of \(S\),
because both belong to the bipartition class \(Y\). Therefore, the
feasibility of assigning value \(0\) to \(y\) depends only on the
value-\(2\) vertices of \(P_k\), as recorded by \(b\).

\paragraph*{\underline{\textbf{Terminal arm states}}}

After all vertices of \(P_k\cup Y_k\) have been processed, we retain only
states with \(p=p_{\infty,k}\). This condition ensures that every vertex of
\(P_k\) assigned value \(0\) is dominated either by a vertex of \(S\) or by
an arm vertex of \(Y_k\) assigned value \(2\). Every vertex of \(Y_k\)
assigned value \(0\) was checked for domination at the time it was
processed.

For each \(t\in\{1,\ldots,m_k\}\cup\{a_{\infty,k}\}\), let \(A_k(t)\) be the
minimum weight of an accepted arm assignment whose closest value-\(2\)
vertex on \(P_k\) has index \(t\). Thus,
\(A_k(t)=\min\{D_k(t,b,p_{\infty,k}):0\leq b\leq m_k\}\). The value
\(t=a_{\infty,k}\) means that no vertex of \(P_k\) receives value \(2\).

\subsection{Handling the Remaining Central Vertices}

After the three arm dynamic programs have been solved for a fixed branch
\((S,\alpha)\), it remains to assign values to the vertices of
\(Y^C\setminus S\). By the definition of the branch, these vertices are not
allowed to receive value \(2\). Hence, each such vertex receives value \(0\)
if it has a value-\(2\) neighbour in \(X\), and otherwise it must receive
value \(1\).

Let \(t_1,t_2,t_3\) be the closest value-\(2\) indices returned by the three
arm computations. A vertex \(y\in Y^C\setminus S\) is dominated by the
center \(c\) when \(\alpha=2\). It is dominated by a value-\(2\) vertex on
arm \(P_k\) exactly when \(t_k\leq r_k(y)\). This condition is false
automatically when \(t_k=a_{\infty,k}\), since then
\(a_{\infty,k}=m_k+1>r_k(y)\). Consequently, \(y\) can receive value \(0\)
exactly when \(\alpha=2\), or when \(t_k\leq r_k(y)\) for at least one
\(k\in\{1,2,3\}\).

For a fixed branch \((S,\alpha)\) and fixed indices \(t_1,t_2,t_3\), let
\(\kappa(S,\alpha,t_1,t_2,t_3)\) be the number of vertices
\(y\in Y^C\setminus S\) such that \(\alpha\neq 2\) and
\(t_k>r_k(y)\) for every \(k\in\{1,2,3\}\). These are exactly the remaining
central \(Y\)-vertices that must receive value \(1\). Once the values
\(t_1,t_2,t_3\) are fixed, the choices for the vertices of
\(Y^C\setminus S\) are independent because there are no edges inside \(Y\).

Therefore, the total weight associated with this branch and this triple is
\(2|S|+\alpha+A_1(t_1)+A_2(t_2)+A_3(t_3)+
\kappa(S,\alpha,t_1,t_2,t_3)\). The minimum weight for the branch is obtained
by taking the minimum of this expression over all finite choices of
\(t_1,t_2,t_3\).

\subsection{Algorithm}

We divide the algorithm into three parts.
Algorithm~\ref{alg:roman-triad-convex} is the main routine,
Algorithm~\ref{alg:process-triad-x} processes a vertex of an arm, and
Algorithm~\ref{alg:process-triad-y} processes an arm vertex belonging to
\(Y\).

For each arm \(P_k\), the values \(a_{\infty,k}\) and \(p_{\infty,k}\) are
special state values, whereas \(+\infty\) denotes an infeasible table entry.
The value \(a_{\infty,k}\) means that no processed vertex of \(P_k\) has been
assigned value \(2\), while \(p_{\infty,k}\) means that no processed vertex
of \(P_k\) is currently pending. Both values are represented numerically by
\(m_k+1\).

For a fixed branch \((S,\alpha)\), the main routine runs an independent
dynamic program on each arm \(P_k\). The vertices of the arm instance
\(P_k\cup Y_k\) are processed in the order
\(x_{k,1},Y_{k,1},x_{k,2},Y_{k,2},\ldots,
x_{k,m_k},Y_{k,m_k}\). The subroutine \textsc{Process-Triad-X} applies the
three transitions corresponding to assigning value \(0\), \(1\), or \(2\)
to a vertex \(x_{k,i}\), whereas \textsc{Process-Triad-Y} applies the
corresponding transitions to a vertex \(y\in Y_k\).

A state is called \emph{finite} if its table entry is smaller than
\(+\infty\); equivalently, it is realized by at least one feasible partial
assignment. Whenever a vertex is processed, the corresponding subroutine
initializes every entry of a temporary table \(D_k'\) to \(+\infty\). The
instruction ``update \(D_k'(a,b,p)\) with \(w\)'' means that
\(D_k'(a,b,p)\) is replaced by
\(\min\{D_k'(a,b,p),w\}\). Thus, if several transitions produce the same
state, only the minimum resulting weight is retained. After all transitions
for the current vertex have been considered, \(D_k'\) is returned to the
main routine and replaces the current table \(D_k\).

After arm \(P_k\) has been processed completely, the value \(A_k(t)\) is
finite when \(A_k(t)<+\infty\). Such a value represents the minimum weight
of an accepted assignment on \(P_k\cup Y_k\) whose closest value-\(2\)
vertex on the arm has index \(t\). The special choice
\(t=a_{\infty,k}\) means that no vertex of \(P_k\) receives value \(2\).
After the three arm tables have been computed, the main routine examines
every triple \((t_1,t_2,t_3)\) for which
\(A_1(t_1)\), \(A_2(t_2)\), and \(A_3(t_3)\) are finite. It then adds the
fixed contribution \(2|S|+\alpha\) and the value \(\kappa\), which is the
minimum additional contribution required for the vertices of
\(Y^C\setminus S\).

\begin{algorithm}[H]
	\DontPrintSemicolon
	\caption{\textsc{Roman-Triad-Convex-Bipartite}}
	\label{alg:roman-triad-convex}
	\KwIn{A triad-convex bipartite graph \(G=(X\cup Y,E)\) with a
		triad-convex representation on \(X\).}
	\KwOut{The Roman domination number \(\gamma_R(G)\).}
	
	Let \(c\) be the center, and let \(P_1,P_2,P_3\) be the three arms\;
	Partition \(Y\) into \(Y^C,Y_1,Y_2,Y_3\)\;
	
	\For{\(k=1\) \KwTo \(3\)}{
		For every \(i\in\{1,\ldots,m_k\}\), set
		\(Y_{k,i}=\{y\in Y_k:r(y)=i\}\)\;
		Set \(a_{\infty,k}=m_k+1\) and
		\(p_{\infty,k}=m_k+1\)\;
	}
	
	Set \(\gamma_R=+\infty\)\;
	
	\ForEach{\(S\subseteq Y^C\) with \(|S|\leq 3\)}{
		\ForEach{\(\alpha\in\{0,1,2\}\)}{
			\If{\(\alpha\neq 0\) or \(S\neq\emptyset\)}{
				\For{\(k=1\) \KwTo \(3\)}{
					\ForEach{\(x_{k,i}\in P_k\)}{
						\eIf{\(N(x_{k,i})\cap S\neq\emptyset\)}{
							Set \(\operatorname{pre}_S(x_{k,i})=0\)\;
						}{
							Set \(\operatorname{pre}_S(x_{k,i})=1\)\;
						}
					}
					
					Set all entries of \(D_k\) to \(+\infty\)\;
					Set
					\(D_k(a_{\infty,k},0,p_{\infty,k})=0\)\;
					
					\For{\(i=1\) \KwTo \(m_k\)}{
						Set
						\(D_k=\textsc{Process-Triad-X}
						(D_k,k,i,\operatorname{pre}_S(x_{k,i}))\)\;
						
						\ForEach{\(y\in Y_{k,i}\), in an arbitrary fixed order}{
							Set
							\(D_k=\textsc{Process-Triad-Y}(D_k,k,y)\)\;
						}
					}
					
					\ForEach{\(t\in
						\{1,\ldots,m_k\}\cup\{a_{\infty,k}\}\)}{
						Set
						\(A_k(t)=
						\min\{D_k(t,b,p_{\infty,k}):
						0\leq b\leq m_k\}\)\;
					}
				}
				
				\ForEach{\((t_1,t_2,t_3)\) such that
					\(A_k(t_k)<+\infty\) for every
					\(k\in\{1,2,3\}\)}{
					Set
					\(\kappa=
					|\{y\in Y^C\setminus S:
					\alpha\neq 2\text{ and }
					t_k>r_k(y)\text{ for every }
					k\in\{1,2,3\}\}|\)\;
					
					Set
					\(\gamma_R=\min\{\gamma_R,
					2|S|+\alpha+A_1(t_1)+A_2(t_2)
					+A_3(t_3)+\kappa\}\)\;
				}
			}
		}
	}
	
	\Return{\(\gamma_R\)}\;
\end{algorithm}

\begin{algorithm}[H]
	\DontPrintSemicolon
	\caption{\textsc{Process-Triad-X}}
	\label{alg:process-triad-x}
	\KwIn{The current arm table \(D_k\), the arm index \(k\), an index
		\(i\), and \(\operatorname{pre}_S(x_{k,i})\).}
	\KwOut{The updated table after processing \(x_{k,i}\).}
	
	Set all entries of \(D_k'\) to \(+\infty\)\;
	
	\ForEach{finite state \((a,b,p)\) of \(D_k\)}{
		\eIf{\(\operatorname{pre}_S(x_{k,i})=0\)}{
			Set \(p_0=p\)\;
		}{
			Set \(p_0=\min\{p,i\}\)\;
		}
		
		Update \(D_k'(a,b,p_0)\) with \(D_k(a,b,p)\)
		\tcp*{\(f(x_{k,i})=0\)}
		
		Update \(D_k'(a,b,p)\) with \(D_k(a,b,p)+1\)
		\tcp*{\(f(x_{k,i})=1\)}
		
		Set \(a_2=\min\{a,i\}\)\;
		Update \(D_k'(a_2,i,p)\) with \(D_k(a,b,p)+2\)
		\tcp*{\(f(x_{k,i})=2\)}
	}
	
	\Return{\(D_k'\)}\;
\end{algorithm}

\begin{algorithm}[H]
	\DontPrintSemicolon
	\caption{\textsc{Process-Triad-Y}}
	\label{alg:process-triad-y}
	\KwIn{The current arm table \(D_k\), the arm index \(k\), and a vertex
		\(y\in Y_k\).}
	\KwOut{The updated table after processing \(y\).}
	
	Set \(l=l(y)\)\;
	Set all entries of \(D_k'\) to \(+\infty\)\;
	
	\ForEach{finite state \((a,b,p)\) of \(D_k\)}{
		\If{\(b\geq l\)}{
			Update \(D_k'(a,b,p)\) with \(D_k(a,b,p)\)
			\tcp*{\(f(y)=0\)}
		}
		
		Update \(D_k'(a,b,p)\) with \(D_k(a,b,p)+1\)
		\tcp*{\(f(y)=1\)}
		
		\eIf{\(p\neq p_{\infty,k}\) and \(p\geq l\)}{
			Set \(p_2=p_{\infty,k}\)\;
		}{
			Set \(p_2=p\)\;
		}
		
		Update \(D_k'(a,b,p_2)\) with \(D_k(a,b,p)+2\)
		\tcp*{\(f(y)=2\)}
	}
	
	\Return{\(D_k'\)}\;
\end{algorithm}

\subsection{Correctness Proof}

We prove that Algorithm~\ref{alg:roman-triad-convex} computes
\(\gamma_R(G)\).

Fix a branch consisting of a set \(S\subseteq Y^C\) with \(|S|\leq 3\) and
a value \(\alpha\in\{0,1,2\}\) for the center \(c\). Every vertex of \(S\)
is forced to receive value \(2\), every vertex of \(Y^C\setminus S\) is
forbidden from receiving value \(2\), and \(c\) receives value \(\alpha\).
If \(\alpha=0\) and \(S=\emptyset\), then \(c\) has no value-\(2\)
neighbour, so the algorithm correctly discards this branch.

Consider an arm \(P_k\). The function \(\operatorname{pre}_S\) records
exactly which vertices of \(P_k\) are already dominated by the forced
value-\(2\) vertices of \(S\). Since
\(\operatorname{pre}_S(x_{k,i})=0\) precisely for these vertices, assigning
value \(0\) to one of them creates no pending obligation. Every other
value-\(0\) vertex of \(P_k\) is kept pending until it is dominated by a
value-\(2\) vertex of \(Y_k\).

The state \((a,b,p)\) stores all information required for the remainder of
the arm computation. The value \(p\) records the earliest pending vertex.
This is sufficient because every future vertex of \(Y_k\) sees the processed
part of \(P_k\) as a suffix. Therefore, if a future value-\(2\) vertex of
\(Y_k\) reaches the earliest pending vertex, then it reaches every later
pending vertex; otherwise, the earliest pending vertex remains pending. The
coordinate \(b\) determines whether a processed vertex of \(Y_k\) assigned
value \(0\) has a value-\(2\) neighbour on the arm, while \(a\) records the
closest value-\(2\) vertex needed to handle the central vertices after the
arm computation.

An induction on the processing order proves the following invariant. After
each processing step on arm \(P_k\), if a feasible partial assignment
realizes state \((a,b,p)\), then \(D_k(a,b,p)\) is the minimum weight among
all such assignments. If no feasible partial assignment realizes the state,
then \(D_k(a,b,p)=+\infty\).

The invariant holds initially because
\(D_k(a_{\infty,k},0,p_{\infty,k})=0\) represents the empty arm assignment,
while every other state is infeasible. When \(x_{k,i}\) is processed, the
three transitions consider all possible values of \(f(x_{k,i})\) and
correctly determine whether the vertex becomes pending. When a vertex
\(y\in Y_{k,i}\) is processed, the transition for \(f(y)=0\) is allowed
exactly when \(b\geq l(y)\), and the transition for \(f(y)=2\) clears the
pending set exactly when the earliest pending index lies in \(N(y)\).
Therefore, every feasible extension is considered, and every generated
extension is feasible. Retaining the minimum value for each resulting state
preserves the invariant.

At the end of the arm computation, only states with
\(p=p_{\infty,k}\) are accepted. Thus, every value-\(0\) vertex of \(P_k\)
is dominated either by a vertex of \(S\) or by a value-\(2\) vertex of
\(Y_k\), and every value-\(0\) vertex of \(Y_k\) was verified when it was
processed. Consequently, \(A_k(t)\) is exactly the minimum arm weight among
accepted assignments whose closest value-\(2\) vertex has index \(t\).

After the three arms have been solved, the only unassigned vertices are
those in \(Y^C\setminus S\). Such a vertex \(y\) can receive value \(0\)
exactly when \(\alpha=2\), or when \(t_k\leq r_k(y)\) for at least one arm
\(P_k\). Otherwise, \(y\) must receive value \(1\). Hence,
\(\kappa(S,\alpha,t_1,t_2,t_3)\) is exactly the minimum additional cost for
the remaining central vertices.

It remains to show that the branching is sufficient. By
Lemma~\ref{lem:triad-three-central}, there exists an optimal Roman
dominating function in which at most three central \(Y\)-vertices receive
value \(2\). Let \(S\) be exactly this set, and let \(\alpha\) be the value
assigned to \(c\). The algorithm considers this branch. For this branch,
each arm table finds the best feasible arm assignment for every possible
closest value-\(2\) index, and the final combination chooses the best triple
and the best values for the remaining central vertices. Therefore, the
algorithm finds a solution of weight at most \(\gamma_R(G)\).

Conversely, every solution constructed by the algorithm is a valid Roman
dominating function. The vertices of \(S\) receive value \(2\), the center
receives value \(\alpha\), every arm is completed with no pending vertex,
and each remaining central \(Y\)-vertex receives value \(0\) only when it
has a value-\(2\) neighbour in \(X\); otherwise, it receives value \(1\).
Thus, the algorithm never constructs an invalid solution and never
underestimates \(\gamma_R(G)\). Therefore, the returned value is exactly
\(\gamma_R(G)\).

\subsection{Running Time Analysis}

\begin{theorem}
	\label{thm:triad-convex-rdp}
	Given a triad-convex representation of a triad-convex bipartite graph \(G\)
	on \(n\) vertices, Algorithm~\ref{alg:roman-triad-convex} computes
	\(\gamma_R(G)\) in \(O(n^7)\) time.
\end{theorem}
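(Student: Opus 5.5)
Correctness of Algorithm~\ref{alg:roman-triad-convex} has already been shown in the preceding correctness proof, so the theorem only needs a running-time count. I will follow the proof of Theorem~\ref{thm:circular-convex-rdp}: count the branches, bound the work per branch, and multiply.

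\textbf{Number of branches.} Let \(m=|X|\), so \(m=1+m_1+m_2+m_3\le n\).
\begin{itemize}
\item The number of sets \(S\subseteq Y^C\) with \(|S|\le 3\) is \(\sum_{j=0}^{3}\binom{|Y^C|}{j}=O(n^3)\).
\item Multiplying by the three choices of \(\alpha\) still leaves \(O(n^3)\) branches \((S,\alpha)\).
\end{itemize}

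\textbf{Work per branch.}
\begin{itemize}
\item Computing \(\operatorname{pre}_S\) on all arms takes \(O(|S|\cdot m)=O(n)\) time, since each \(y\in S\) marks the prefixes \(x_{k,1},\dots,x_{k,r_k(y)}\).
\item For arm \(P_k\), the table \(D_k\) has \(O(m_k^3)\) states, since each of \(a,b,p\) takes \(m_k+1\) values.
\item Each vertex of \(P_k\cup Y_k\) is processed once. Reinitializing \(D_k'\) and scanning the finite states costs \(O(m_k^3)\), and each state has a constant number of constant-time transitions. So arm \(k\) costs \(O((m_k+|Y_k|)m_k^3)=O(nm^3)\), and the three arms together cost \(O(nm^3)\).
\item Extracting \(A_k(t)\) takes \(O(m_k^2)\) per arm.
\end{itemize}

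\textbf{Main obstacle: the combination step.} There are \(O(m_1m_2m_3)=O(m^3)\) triples \((t_1,t_2,t_3)\). Computing \(\kappa\) directly scans \(Y^C\setminus S\) in \(O(n)\) time, so this step costs \(O(nm^3)\). That matches the arm dynamic programs, so a branch costs \(O(nm^3)=O(n^4)\) overall. Precomputing \(\kappa\) more cleverly is therefore unnecessary.

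\textbf{Total.} Multiplying \(O(n^3)\) branches by \(O(n^4)\) work per branch gives \(O(n^7)\). The preprocessing is only polynomial and dominated by this: identifying \(c\) and the arms, partitioning \(Y\) into \(Y^C,Y_1,Y_2,Y_3\), computing \(r_k(y)\), \(l(y)\), \(r(y)\), and bucketing \(Y_{k,i}\) all take \(O(n^2)\) time from the given representation.
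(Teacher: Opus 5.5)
Your proposal is correct and follows the paper's proof step for step. There are \(O(n^3)\) sets \(S\) times three values of \(\alpha\), and each branch costs \(O(n^4)\), both for the three arm tables (each with \(O(m_k^3)\) states, updated once per vertex of \(P_k\cup Y_k\)) and for the combination over \(O(n^3)\) triples \((t_1,t_2,t_3)\) with an \(O(n)\) scan for \(\kappa\); like the paper, you take correctness from the preceding section. One small slip: the number of triples is \((m_1+1)(m_2+1)(m_3+1)\) rather than \(m_1m_2m_3\), because of the sentinel \(a_{\infty,k}\), but this is still \(O(m^3)\) and changes nothing.
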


\begin{proof}
	There are at most
	\(1+|Y^C|+\binom{|Y^C|}{2}+\binom{|Y^C|}{3}=O(n^3)\) choices for the set
	\(S\subseteq Y^C\) with \(|S|\leq 3\), and there are three choices for
	\(\alpha\).
	
	Fix a branch \((S,\alpha)\). On arm \(P_k\), the coordinate \(a\) has
	\(m_k+1\) possible values, namely
	\(\{1,\ldots,m_k\}\cup\{a_{\infty,k}\}\); the coordinate \(b\) has
	\(m_k+1\) possible values, namely \(\{0,1,\ldots,m_k\}\); and the coordinate
	\(p\) has \(m_k+1\) possible values, namely
	\(\{1,\ldots,m_k\}\cup\{p_{\infty,k}\}\). Hence, the number of states on
	arm \(P_k\) is \(O(m_k^3)\).
	
	Each vertex of \(P_k\cup Y_k\) is processed once, and each finite state
	produces at most three transitions. Every transition takes constant time.
	Thus, the time required for arm \(P_k\) is
	\(O((m_k+|Y_k|)m_k^3)\). Summing over the three arms gives \(O(n^4)\) time
	in the worst case.
	
	There are at most
	\((m_1+1)(m_2+1)(m_3+1)=O(n^3)\) triples
	\((t_1,t_2,t_3)\). For each triple, the value \(\kappa\) can be computed by
	scanning \(Y^C\), which takes \(O(n)\) time. Therefore, the combination step
	takes \(O(n^4)\) time for one fixed branch. Computing all
	\(\operatorname{pre}_S\) values and grouping the arm vertices by their right
	endpoints take no more than this bound.
	
	Hence, each branch takes \(O(n^4)\) time. Since there are \(O(n^3)\) choices
	for \(S\) and only three choices for \(\alpha\), the total running time is
	\(O(n^7)\).
\end{proof}

\begin{remark}
	The algorithm assumes that a triad-convex representation is given. It uses
	the center \(c\), the ordered arms \(P_1,P_2,P_3\), and the interval
	endpoints of every arm vertex. Recognizing triad-convex bipartite graphs and
	constructing a representation are separate preprocessing questions and are
	not part of the dynamic program described here.
\end{remark}

\subsection{Example}

We illustrate Algorithm~\textsc{Roman-Triad-Convex-Bipartite} on a small
triad-convex bipartite graph. Let the triad have center \(c\) and arms
\(P_1,P_2,P_3\), where \(P_1=\{x_{1,1},x_{1,2}\}\),
\(P_2=\{x_{2,1}\}\), and \(P_3=\{x_{3,1}\}\). Let
\(Y=\{y_1,y_2,y_3\}\). The vertices \(y_1\) and \(y_2\) are central
vertices, while \(y_3\) is an arm vertex on \(P_1\). Define
\(N(y_1)=\{c,x_{1,1},x_{2,1},x_{3,1}\}\),
\(N(y_2)=\{c,x_{1,1},x_{1,2}\}\), and
\(N(y_3)=\{x_{1,2}\}\). Thus, \(r_1(y_1)=r_2(y_1)=r_3(y_1)=1\), while
\(r_1(y_2)=2\) and \(r_2(y_2)=r_3(y_2)=0\). Hence,
\(Y^C=\{y_1,y_2\}\), \(Y_1=\{y_3\}\), and there are no arm vertices on
\(P_2\) or \(P_3\). The graph and the Roman dominating function obtained by
the algorithm are shown in Figure~\ref{fig:roman-triad-convex-example}.

\begin{figure}[h]
	\centering
	\begin{tikzpicture}[
		scale=1.10,
		transform shape,
		xnode/.style={circle, draw, minimum size=7mm, inner sep=0pt},
		ynode/.style={circle, draw, minimum size=7mm, inner sep=0pt},
		opt2/.style={
			circle,
			draw,
			fill=gray!30,
			very thick,
			minimum size=7mm,
			inner sep=0pt
		},
		every node/.style={font=\small}
		]
		\node[xnode] (c) at (0,0) {\(c\)};
		\node[xnode] (x11) at (1.6,0) {\(x_{1,1}\)};
		\node[opt2]  (x12) at (3.2,0) {\(x_{1,2}\)};
		\node[xnode] (x21) at (-1.2,1.2) {\(x_{2,1}\)};
		\node[xnode] (x31) at (-1.2,-1.2) {\(x_{3,1}\)};
		
		\draw[dashed] (c)--(x11)--(x12);
		\draw[dashed] (c)--(x21);
		\draw[dashed] (c)--(x31);
		
		\node[opt2]  (y1) at (0,2.5) {\(y_1\)};
		\node[ynode] (y2) at (2.6,1.5) {\(y_2\)};
		\node[ynode] (y3) at (3.2,-1.2) {\(y_3\)};
		
		\draw (y1)--(c);
		\draw (y1)--(x11);
		\draw (y1)--(x21);
		\draw (y1)--(x31);
		
		\draw (y2)--(c);
		\draw (y2)--(x11);
		\draw (y2)--(x12);
		
		\draw (y3)--(x12);
		
		\node[above] at (0,2.85) {\(2\)};
		\node[right] at (3.55,0) {\(2\)};
		\node[right] at (2.95,1.85) {\(0\)};
		\node[right] at (3.55,-1.2) {\(0\)};
	\end{tikzpicture}
	\caption{A triad-convex bipartite graph used in the example. The shaded
		vertices receive value \(2\), namely \(f(y_1)=2\) and
		\(f(x_{1,2})=2\). All other vertices receive value \(0\).}
	\label{fig:roman-triad-convex-example}
\end{figure}

We now trace the branch illustrated in
Figure~\ref{fig:roman-triad-convex-example}. Take \(S=\{y_1\}\) and
\(\alpha=f(c)=0\). This branch is feasible because \(c\) is dominated by
\(y_1\), and the fixed contribution is \(2|S|+\alpha=2\). The vertex \(y_1\)
pre-dominates \(x_{1,1}\), \(x_{2,1}\), and \(x_{3,1}\), but it does not
pre-dominate \(x_{1,2}\). Therefore,
\(\operatorname{pre}_S(x_{1,1})=
\operatorname{pre}_S(x_{2,1})=
\operatorname{pre}_S(x_{3,1})=0\), while
\(\operatorname{pre}_S(x_{1,2})=1\).

On arm \(P_1\), we have \(a_{\infty,1}=p_{\infty,1}=3\), and the processing
order is \(x_{1,1},x_{1,2},y_3\). The following table traces one optimal arm
branch.

\begin{center}
	\large
	\begin{tabular}{c|c|c}
		Step & Chosen value & Resulting state and value \\
		\hline
		Initial
		& --
		& \(D_1(a_{\infty,1},0,p_{\infty,1})=0\) \\
		
		After \(x_{1,1}\)
		& \(f(x_{1,1})=0\), \(\operatorname{pre}_S(x_{1,1})=0\)
		& \(D_1(a_{\infty,1},0,p_{\infty,1})=0\) \\
		
		After \(x_{1,2}\)
		& \(f(x_{1,2})=2\)
		& \(D_1(2,2,p_{\infty,1})=2\) \\
		
		After \(y_3\)
		& \(f(y_3)=0\)
		& \(D_1(2,2,p_{\infty,1})=2\)
	\end{tabular}
\end{center}

The first step creates no pending vertex because \(x_{1,1}\) is already
dominated by \(y_1\). The vertex \(x_{1,2}\) receives value \(2\), so it is
both the closest and the farthest value-\(2\) vertex on \(P_1\). Finally,
\(y_3\) may receive value \(0\), because it is adjacent to
\(x_{1,2}\). Hence, \(A_1(2)=2\).

On arm \(P_2\), the only vertex \(x_{2,1}\) is already dominated by \(y_1\).
We assign it value \(0\), so no pending vertex is created and
\(A_2(a_{\infty,2})=0\). Similarly,
\(A_3(a_{\infty,3})=0\).

It remains to handle \(y_2\in Y^C\setminus S\). Since \(\alpha=0\), the
center does not dominate \(y_2\). However, the closest value-\(2\) vertex on
\(P_1\) has index \(t_1=2\), and \(t_1=2\leq r_1(y_2)=2\). Hence, \(y_2\)
is dominated by \(x_{1,2}\) and may receive value \(0\). Therefore,
\(\kappa(S,\alpha,2,a_{\infty,2},a_{\infty,3})=0\).

The total weight of this branch is
\(2|S|+\alpha+A_1(2)+A_2(a_{\infty,2})+
A_3(a_{\infty,3})+
\kappa(S,\alpha,2,a_{\infty,2},a_{\infty,3})
=2+0+2+0+0+0=4\). Thus, the algorithm finds the Roman dominating function
shown in Figure~\ref{fig:roman-triad-convex-example}, namely
\(f(y_1)=2\), \(f(x_{1,2})=2\), and every other vertex receives value \(0\).

This value is optimal. Suppose, for a contradiction, that there exists a
Roman dominating function of weight at most \(3\). Such a function must have
exactly one vertex assigned value \(2\), together with at most one vertex
assigned value \(1\), because a function with no value-\(2\) vertex must
assign value \(1\) to all eight vertices.

If the unique value-\(2\) vertex belongs to \(X\), then every other vertex of
\(X\) must receive a positive value, because there are no edges inside \(X\).
There are four such vertices, so the weight is at least \(2+4=6\).

Now suppose that the unique value-\(2\) vertex belongs to \(Y\). If it is
\(y_1\), then \(x_{1,2}\) is not dominated, while both \(y_2\) and \(y_3\)
also have no value-\(2\) neighbour in \(X\). Thus, at least three additional
vertices must receive positive values. If it is \(y_2\), then both
\(x_{2,1}\) and \(x_{3,1}\) are not dominated, and the vertices \(y_1\) and
\(y_3\) also have no value-\(2\) neighbour in \(X\). If it is \(y_3\), then
the vertices \(c,x_{1,1},x_{2,1},x_{3,1}\) are not dominated, and both
\(y_1\) and \(y_2\) have no value-\(2\) neighbour in \(X\). In every case,
more than one additional vertex must receive a positive value, contradicting
the assumed weight bound of \(3\).

Therefore, no Roman dominating function of weight at most \(3\) exists, and
\(\gamma_R(G)=4\).

\section{Roman Domination on \(P_4\)-Tidy Graphs}\label{sec:rd-p4-tidy}

In this section, we show that the Roman Domination Problem is linear-time solvable on \(P_4\)-tidy graphs by using the structural decomposition of Giakoumakis et al.~\cite{giakoumakis1997p_4}. The class of cographs is the subclass of \(P_4\)-tidy graphs in which no induced \(P_4\) occurs. Roman Domination is already known to be solvable in \(O(n+m)\) time on cographs by the algorithm of Liedloff et al.~\cite{liedloff2008efficient}. Thus, the result of this section extends the known tractability from cographs to the larger graph classes, namely \(P_4\)-tidy graphs. In addition to disjoint-union and join nodes, the decomposition contains spider, quasi-spider and exceptional nodes corresponding to \(P_5\), \(\overline{P_5}\), and \(C_5\).

\paragraph{\underline{\textbf{Bounded clique-width viewpoint}}}

There are two general bounded-clique-width arguments that already imply tractability. First, Roman Domination is a LinEMSOL optimization problem. Indeed, a Roman dominating function is represented by a partition \(V(G)=V_0\cup V_1\cup V_2\), where every vertex in \(V_i\) receives value \(i\), the feasibility condition requires every vertex of \(V_0\) to have a neighbor in \(V_2\), and the objective is to minimize \(|V_1|+2|V_2|\). Consequently, the LinEMSOL meta-theorem of Courcelle et al.~\cite{courcelle2000linear} yields a linear-time algorithm on every graph class of bounded clique-width, provided that a clique-width expression of bounded width is supplied.

Roman Domination also fits directly into the framework of locally checkable problems parameterized by clique-width developed by Baghirova et al.~\cite{baghirova2022locally}. Use the three colors \(\{0,1,2\}\), and let \(n_j\) denote the number of neighbours of a vertex that receive color \(j\). The local check accepts a tuple \((v,a,n_0,n_1,n_2)\) exactly when \(a\in\{1,2\}\) or \(n_2\geq 1\). Thus, a coloring accepted at every vertex is precisely a Roman dominating function. The check is color-counting and \(1\)-stable, since it only distinguishes between \(n_2=0\) and \(n_2\geq 1\). Assigning local weights \(0\), \(1\), and \(2\) to the three colors makes the total coloring weight equal to the Roman weight. Hence, the algorithm of Baghirova et al.~\cite{baghirova2022locally} yields an FPT algorithm parameterized by clique-width, provided that a clique-width expression is given; in particular, it is linear-time solvable on every class of bounded clique-width. The structural decomposition below also shows that \(P_4\)-tidy graphs have bounded clique-width: disjoint union and join preserve bounded clique-width, the exceptional graphs have constant size, and a spider or quasi-spider is obtained from its recursively decomposed head by attaching a constant-label body and legs. Nevertheless, our purpose is to give a direct, exact algorithm with explicit recurrence rules and a linear running time on the decomposition tree.

\paragraph{\underline{\textbf{The Giakoumakis et al. decomposition tree}}}

A graph is called \(P_4\)-tidy if, for every induced \(P_4\), there is at most one vertex outside that \(P_4\) that forms more than one induced \(P_4\) together with it. The structural theorem of Giakoumakis et al.~\cite{giakoumakis1997p_4} states that every induced subgraph \(H\) of a \(P_4\)-tidy graph satisfies one of the following: \(H\cong K_1\); \(H\) is disconnected; \(\overline{H}\) is disconnected; \(H\) is a spider or a quasi-spider whose head induces a \(P_4\)-tidy graph; or \(H\) is isomorphic to one of \(P_5\), \(\overline{P_5}\), and \(C_5\).

This characterization gives a recursive decomposition tree \(T_G\). Each node \(h\) represents an induced subgraph \(G_h\). If \(G_h\cong K_1\), or \(G_h\) is isomorphic to one of \(P_5\), \(\overline{P_5}\), and \(C_5\), then \(h\) is a basic node. If \(G_h\) is disconnected, then \(h\) is a disjoint-union node whose children represent the connected components of \(G_h\). If \(\overline{G_h}\) is disconnected, then \(h\) is a join node whose children represent the connected components of \(\overline{G_h}\). Finally, if \(G_h\) is a spider or quasi-spider with partition \(S\cup C\cup R\), then \(h\) records whether the spider is thin or thick, whether one vertex of \(S\cup C\) is replaced by a pair of twins, and the partition \(S\cup C\cup R\). When \(R\neq\emptyset\), the node has a distinguished child representing the recursively decomposed head \(G_h[R]\). Thus, the exceptional graphs and \(K_1\) are terminal pieces, whereas the head of a spider or quasi-spider is decomposed recursively.

A spider is a graph whose vertex set can be partitioned into \(S\cup C\cup R\), where \(S=\{s_1,\ldots,s_k\}\) is an independent set, \(C=\{c_1,\ldots,c_k\}\) is a clique, \(k\geq 2\), every vertex of \(R\) is adjacent to every vertex of \(C\) and to no vertex of \(S\), and the adjacencies between \(S\) and \(C\) are of one of two types. In a thin spider, \(s_i\) is adjacent exactly to \(c_i\) among the vertices of \(C\). In a thick spider, \(s_i\) is adjacent exactly to the vertices of \(C\setminus\{c_i\}\). The set \(R\) is called the head. A quasi-spider is obtained from a spider by replacing exactly one vertex of \(S\cup C\) by two twins, either adjacent true twins or nonadjacent false twins, with the same neighbours outside the replacement pair as the original vertex.

For the dynamic program, we store two values for each subgraph. The first value is the Roman domination number \(\gamma_R(G)\). The second value is an auxiliary value, denoted by \(\gamma_R^+(G)\), which is the minimum weight of a Roman dominating function of \(G\) in which at least one vertex receives value \(2\). For \(K_1\), we have \(\gamma_R(K_1)=1\) and \(\gamma_R^+(K_1)=2\). For the other three exceptional graphs, the required values are \(\gamma_R(P_5)=\gamma_R^+(P_5)=4\), \(\gamma_R(\overline{P_5})=\gamma_R^+(\overline{P_5})=3\), and \(\gamma_R(C_5)=\gamma_R^+(C_5)=4\). These values are easy to verify directly. For \(P_5\), two vertices of value \(2\) at distance three give weight \(4\), while weight at most \(3\) cannot dominate both ends of the path. For \(C_5\), two vertices of value \(2\) give weight \(4\), while one vertex of value \(2\) leaves two non-neighbours that cannot both be handled with total weight at most \(3\). For \(\overline{P_5}\), one vertex of value \(2\) together with one vertex of value \(1\) gives weight \(3\), and weight \(2\) is impossible because \(\overline{P_5}\) has no universal vertex.

\subsection{Recurrence Rules}

We first describe the recurrence rules used by the algorithm. Suppose that \(G\) is disconnected, with connected components \(G_1,\ldots,G_t\). Since there are no edges between distinct components, Roman domination is additive over connected components. Hence, \(\gamma_R(G)=\sum_{i=1}^{t}\gamma_R(G_i)\). For the auxiliary value \(\gamma_R^+(G)\), at least one component must contain a vertex assigned value \(2\). Thus, if \(A=\sum_{i=1}^{t}\gamma_R(G_i)\), then \(\gamma_R^+(G)=A+\min_{1\leq i\leq t}\{\gamma_R^+(G_i)-\gamma_R(G_i)\}\).

Next, suppose that \(\overline{G}\) is disconnected. Then \(G\) is the join \(G_1\vee G_2\vee\cdots\vee G_t\), where \(G_1,\ldots,G_t\) are the connected components of \(\overline{G}\), and \(t\geq 2\). If no vertex receives value \(2\), then every vertex must receive value \(1\), giving weight \(|V(G)|\). If all vertices of value \(2\) lie in a single join-part \(G_i\), then the restriction to \(G_i\) must be a Roman dominating function of \(G_i\) with at least one vertex of value \(2\), while all vertices outside \(G_i\) may receive value \(0\), because they are adjacent to every vertex of \(G_i\). This gives cost \(\gamma_R^+(G_i)\). Finally, if at least two join-parts contain a vertex of value \(2\), then choosing one vertex of value \(2\) in each of two distinct join-parts dominates the whole graph, giving cost \(4\), and no solution of this type can have a smaller cost. Therefore \(\gamma_R^+(G)=\min\{4,\min_{1\leq i\leq t}\gamma_R^+(G_i)\}\), and \(\gamma_R(G)=\min\{|V(G)|,\gamma_R^+(G)\}\).

It remains to handle spider and quasi-spider nodes. Let \(G\) be a spider or quasi-spider with underlying spider partition \(S\cup C\cup R\), where \(S=\{s_1,\ldots,s_k\}\) and \(C=\{c_1,\ldots,c_k\}\). In a quasi-spider, one vertex of \(S\cup C\) is replaced by two twins, either adjacent or nonadjacent, and the value \(k\) refers to the size of the underlying spider before this replacement.

If the underlying spider is thin, then \(\gamma_R(G)=\gamma_R^+(G)=k+1\). For the upper bound, choose an index \(q\) so that the clique-side vertex corresponding to \(c_q\) is not replaced whenever this matters; this is possible because \(k\geq 2\). Assign value \(2\) to that clique-side vertex, assign value \(0\) to the vertex or vertices corresponding to \(s_q\), assign value \(1\) to every \(S\)-side vertex corresponding to \(s_i\) with \(i\neq q\), and assign value \(0\) to all remaining vertices. Then the vertex or vertices corresponding to \(s_q\) are dominated by the chosen value-\(2\) vertex, all clique-side vertices are dominated because the clique side is complete except possibly inside the replacement pair, and all vertices of \(R\) are dominated because \(R\) is complete to the clique side. The total weight is \(k+1\). For the lower bound, for every index \(i\), the block corresponding to \(\{s_i,c_i\}\), with the natural interpretation when one of these vertices is replaced by two twins, must contribute at least one unit of weight; otherwise, the \(S\)-side vertex or vertices corresponding to \(s_i\) are not dominated. Hence, the total weight is at least \(k\). If the total weight were exactly \(k\), then each block would contribute exactly one unit, and no vertex would receive value \(2\). Then the clique side, and also \(R\) when \(R\neq\emptyset\), could not be dominated. Thus, one additional unit is necessary, and every Roman dominating function has weight at least \(k+1\). Therefore \(\gamma_R(G)=\gamma_R^+(G)=k+1\).

If the underlying spider is thick, then \(\gamma_R(G)=\gamma_R^+(G)=3\). For the upper bound, choose a clique-side vertex corresponding to \(c_q\) so that the corresponding \(S\)-side vertex is not the replaced vertex; this is possible because \(k\geq 2\). Assign value \(2\) to this clique-side vertex, assign value \(1\) to the unique \(S\)-side vertex not adjacent to it, and assign value \(0\) to all remaining vertices. Then every other \(S\)-side vertex is adjacent to the chosen value-\(2\) vertex, every other clique-side vertex is adjacent to it, and every vertex of \(R\) is adjacent to it. Hence, the weight is \(3\). For the lower bound, a Roman dominating function of weight at most \(2\) has either no vertex of value \(2\), or exactly one vertex of value \(2\). If no vertex receives value \(2\), then no vertex can receive value \(0\), which is impossible because the graph has at least four vertices. If there is exactly one vertex of value \(2\), then that vertex cannot dominate the whole graph. If it lies on the clique side, then its private non-neighbour on the \(S\)-side is not dominated. If it lies on the \(S\)-side, then another \(S\)-side vertex is not dominated. If it lies in \(R\), then no \(S\)-side vertex is dominated. Hence, weight at most \(2\) is impossible, and the optimum value is \(3\).

\subsection{Algorithm}

We now give the algorithm formally. It assumes that the Giakoumakis-Roussel-Thuillier \(P_4\)-tidy decomposition tree described above is given. If only the graph is given, we first construct this decomposition tree using the recognition and decomposition algorithm of Giakoumakis et al.~\cite{giakoumakis1997p_4}, and then apply the bottom-up recurrence.

\begin{algorithm}[H]
	\DontPrintSemicolon
	\caption{\textsc{Roman-\(P_4\)-Tidy}}\label{alg:roman-p4-tidy}
	\KwIn{A \(P_4\)-tidy graph \(G\), together with its Giakoumakis-Roussel-Thuillier \(P_4\)-tidy decomposition tree \(T_G\).}
	\KwOut{The Roman domination number \(\gamma_R(G)\).}
	
	Process the nodes of \(T_G\) in postorder\;
	
	\ForEach{node \(h\) of \(T_G\) in postorder}{
		Let \(G_h\) be the subgraph represented by \(h\), and set \(n_h=|V(G_h)|\)\;
		
		\uIf{\(G_h\cong K_1\)}{
			Set \((\rho(h),\rho^+(h))=(1,2)\)\;
		}
		\uElseIf{\(G_h\cong P_5\) or \(G_h\cong C_5\)}{
			Set \((\rho(h),\rho^+(h))=(4,4)\)\;
		}
		\uElseIf{\(G_h\cong \overline{P_5}\)}{
			Set \((\rho(h),\rho^+(h))=(3,3)\)\;
		}
		\uElseIf{\(h\) is a disconnected node with children \(h_1,\ldots,h_t\)}{
			Set \(A=\sum_{i=1}^{t}\rho(h_i)\) and \(\delta=\min_{1\leq i\leq t}\{\rho^+(h_i)-\rho(h_i)\}\)\;
			Set \((\rho(h),\rho^+(h))=(A,A+\delta)\)\;
		}
		\uElseIf{\(h\) is a complement-disconnected node with children \(h_1,\ldots,h_t\)}{
			Set \(B=\min_{1\leq i\leq t}\rho^+(h_i)\)\;
			Set \(\rho^+(h)=\min\{4,B\}\) and \(\rho(h)=\min\{n_h,\rho^+(h)\}\)\;
		}
		\uElseIf{\(h\) is a thin spider or thin quasi-spider node with underlying spider partition \(S\cup C\cup R\)}{
			Set \(k=|S|=|C|\) in the underlying spider\;
			Set \((\rho(h),\rho^+(h))=(k+1,k+1)\)\;
		}
		\ElseIf{\(h\) is a thick spider or thick quasi-spider node}{
			Set \((\rho(h),\rho^+(h))=(3,3)\)\;
		}
	}
	
	Let \(r\) be the root of \(T_G\)\;
	\Return{\(\rho(r)\)}\;
\end{algorithm}

\subsection{Correctness Proof}

\begin{theorem}\label{thm:p4-tidy-roman-correct}
	Algorithm~\ref{alg:roman-p4-tidy} computes the Roman domination number of a \(P_4\)-tidy graph \(G\).
\end{theorem}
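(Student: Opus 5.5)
We prove, by induction on the height of a node \(h\) of \(T_G\) (postorder), that the algorithm returns \(\rho(h)=\gamma_R(G_h)\) and \(\rho^+(h)=\gamma_R^+(G_h)\). The theorem then follows at the root. The induction hypothesis is that both values are correct at every child. The cases are exactly the node types of the Giakoumakis et al. characterization. Since every node of \(T_G\) falls under exactly one branch of the algorithm, the case analysis is exhaustive.

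\textbf{Basic nodes.} For \(K_1\), the only RDFs assign \(1\) or \(2\), so the pair is \((1,2)\). For \(P_5\), \(\overline{P_5}\) and \(C_5\) the values are constants justified in the preceding discussion. In each of these three cases the optimum uses a value-\(2\) vertex, which gives \(\rho=\rho^+\).

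\textbf{Disconnected nodes.} An RDF of \(G_h\) is exactly a tuple of RDFs of the components, with additive weight. This gives \(\gamma_R=\sum_i\gamma_R(G_i)\). For \(\gamma_R^+\), some component \(i\) must use an RDF with a value-\(2\) vertex and the others are unrestricted. Optimizing gives \(A+\min_i(\gamma_R^+(G_i)-\gamma_R(G_i))\).

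\textbf{Join nodes.} Classify an RDF \(f\) by the set of join-parts containing a value-\(2\) vertex.
\begin{itemize}
\item If the set is empty, all values are \(1\), giving weight \(n_h\).
\item If it is a single part \(G_i\), then \(f|_{G_i}\) is an RDF of \(G_i\) with a \(2\). A value-\(0\) vertex of \(G_i\) can only be dominated from inside \(G_i\), since the other parts contain no \(2\). So the weight is at least \(\gamma_R^+(G_i)\), and this is achieved by setting everything outside \(G_i\) to \(0\).
\item If it has at least two parts, the weight is at least \(4\), and \(4\) is attained by one \(2\) in each of two parts.
\end{itemize}
Hence \(\gamma_R^+=\min\{4,\min_i\gamma_R^+(G_i)\}\) and \(\gamma_R=\min\{n_h,\gamma_R^+\}\). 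Here \(t\ge2\) guarantees the two-part option exists.

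\textbf{Spider and quasi-spider nodes.} We use the thin and thick computations already given in the recurrence section, checking the upper and lower bounds carefully in the quasi-spider variants. These nodes need no child values, since the answer does not depend on \(R\).

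\textbf{Main obstacle.} The delicate step is the thin quasi-spider lower bound when \(s_i\) is replaced by a pair of twins. A single value-\(2\) vertex at \(c_i\) then covers both twins with one block still costing \(2\). We must check that the block argument still yields \(\ge k\) per-block units plus one extra. The argument is as follows.
\begin{itemize}
\item Each block's \(S\)-side vertices are dominated only by themselves or by their \(C\)-side partner. So each block needs weight \(\ge1\).
\item If a block carries a \(2\), it contributes \(\ge2\).
\item Weight \(k\) therefore forces all values to lie in \(\{0,1\}\). Then the \(C\)-side vertices with value \(0\) are undominated, a contradiction.
\end{itemize}

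For thick quasi-spiders there are two points to verify.
\begin{itemize}
\item The weight-\(3\) construction can avoid the replaced vertex when choosing \(c_q\). This uses \(k\ge2\); if the replaced vertex is \(c_q\) itself, pick another index.
\item No single vertex of value \(2\) dominates everything. The argument is that each clique vertex misses some \(S\)-side vertex, and that \(S\) is independent, which twin replacement preserves up to true twins inside \(S\). When \(s_q\) is replaced by true twins, a twin dominates only its partner and \(C\setminus\{c_q\}\), so \(k\ge2\) still leaves another \(S\) vertex undominated.
\end{itemize}
We would write these twin subcases out explicitly.
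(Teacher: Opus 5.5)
Your proposal is correct and follows essentially the same route as the paper's proof. Both argue by induction over the decomposition tree with the same base values, the additive rule at disjoint-union nodes, the same three-way split at join nodes (no \(2\), all \(2\)'s in one join-part, \(2\)'s in at least two join-parts), and the closed thin (\(k+1\)) and thick (\(3\)) formulas at spider and quasi-spider nodes, which ignore the head. Your handling of the twin subcases is more explicit than the paper's, which leaves them implicit; this covers placing the \(2\) on \(c_i\) when \(s_i\) is replaced and the true-twin check in the thick lower bound. Two trivial points are left unstated: a head vertex misses all of \(S\), and weight at most \(2\) with no \(2\) is impossible since \(n_h\geq 4\).
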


\begin{proof}
	We prove the correctness by induction on the Giakoumakis–Roussel–Thuillier decomposition tree. The base cases are \(K_1\), \(P_5\), \(\overline{P_5}\), and \(C_5\). For these graphs, the values stored by the algorithm are exactly the values listed above: \(\gamma_R(K_1)=1\), \(\gamma_R^+(K_1)=2\), \(\gamma_R(P_5)=\gamma_R^+(P_5)=4\), \(\gamma_R(\overline{P_5})=\gamma_R^+(\overline{P_5})=3\), and \(\gamma_R(C_5)=\gamma_R^+(C_5)=4\). Hence, the algorithm is correct on all basic nodes.
	
	Assume now that the algorithm stores correct values for all children of a node \(h\), and let \(G_h\) be the graph represented by \(h\). If \(h\) is a disconnected node, then \(G_h\) is the disjoint union of the graphs represented by its children. Since no edge joins two different connected components, every Roman dominating function of \(G_h\) is obtained by taking Roman dominating functions independently on the components. Hence, the value \(\rho(h)\) stored by the algorithm is exactly \(\gamma_R(G_h)\). For \(\rho^+(h)\), at least one component must contain a vertex of value \(2\), while all other components may use ordinary optimum Roman dominating functions. Therefore, the algorithm correctly chooses the component in which the positive condition is imposed, and the stored value \(\rho^+(h)\) is exactly \(\gamma_R^+(G_h)\).
	
	If \(h\) is a complement-disconnected node, then \(G_h\) is the join of the graphs represented by its children. If no vertex receives value \(2\), then every vertex must receive value \(1\), giving weight \(n_h\). If all value-\(2\) vertices lie in a single join-part, then that join-part must realize a positive Roman dominating function, while every vertex outside it may receive value \(0\), because every outside vertex is adjacent to every vertex of that join-part. If value-\(2\) vertices occur in at least two join-parts, then assigning value \(2\) to one vertex in each of two different join-parts dominates the whole graph, and the best cost of this type is \(4\). These cases are exhaustive, so the join formulas used by the algorithm are correct.
	
	If \(h\) is a spider or quasi-spider node, then the node records its underlying partition \(S\cup C\cup R\), and the distinguished child representing \(G_h[R]\), when \(R\neq\emptyset\), has already been processed. The correctness at \(h\) follows from the closed formulas proved above. In a thin spider or thin quasi-spider, the optimum value is \(k+1\), and in a thick spider or thick quasi-spider, the optimum value is \(3\). The proof of these formulas accounts for the possible replacement of one vertex by two twins in a quasi-spider. It also accounts for the set \(R\), because every vertex of \(R\) is complete to the clique side and anticomplete to the independent side. Hence, a value-\(2\) vertex placed on the clique side dominates all of \(R\), and the vertices in \(R\) cannot force a larger value than the stated optimum.
	
	Thus, by induction, every node receives the correct values \(\gamma_R\) and \(\gamma_R^+\). In particular, if \(r\) is the root of the decomposition tree, then the value \(\rho(r)\) returned by Algorithm~\ref{alg:roman-p4-tidy} is exactly \(\gamma_R(G)\).
\end{proof}

\subsection{Running Time Analysis}

\begin{theorem}\label{thm:p4-tidy-roman-running}
	Given the Giakoumakis–Roussel–Thuillier decomposition tree of a \(P_4\)-tidy graph \(G\), Algorithm~\ref{alg:roman-p4-tidy} computes \(\gamma_R(G)\) in time linear in the size of the decomposition tree. Consequently, using the linear-time recognition and decomposition algorithm of Giakoumakis et al.~\cite{giakoumakis1997p_4}, the Roman domination number of a \(P_4\)-tidy graph can be computed in \(O(n+m)\) time.
\end{theorem}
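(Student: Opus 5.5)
The argument splits into two parts: a charging argument for Algorithm~\ref{alg:roman-p4-tidy} on the given tree \(T_G\), and an appeal to the recognition algorithm of Giakoumakis et al.~\cite{giakoumakis1997p_4} for constructing \(T_G\).

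\textbf{Step 1: size of \(T_G\).} I would first bound the number of nodes of \(T_G\) by \(O(n)\).
\begin{itemize}
\item Every leaf is a basic node (\(K_1\), \(P_5\), \(\overline{P_5}\) or \(C_5\)), and the vertex sets of the leaves are pairwise disjoint. Hence there are at most \(n\) leaves.
\item A disjoint-union or join node has at least two children.
\item A spider or quasi-spider node has at most one child, namely the head \(G_h[R]\). However, it strictly removes the nonempty set \(S\cup C\), so the vertex sets along a chain of such nodes strictly decrease.
\item Counting each node against the vertices it removes or the branching it creates gives at most \(2n\) nodes in total.
\end{itemize}

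\textbf{Step 2: work per node.} Next I would bound the work at each node during the postorder pass.
\begin{itemize}
\item Basic nodes and spider nodes do constant work, given the recorded values \(k\) and the thin/thick flag.
\item A disjoint-union or join node with \(t\) children computes a sum or minimum in \(O(t)\) time.
\item The quantity \(n_h\) can be computed bottom-up as the sum of the children's sizes. For a spider node, it is the child's size plus \(|S|+|C|\) together with the twin adjustment. Recomputing \(n_h\) from scratch would not be linear.
\end{itemize}
Summing \(O(1+t_h)\) over all nodes gives \(O(\text{number of nodes}+\text{number of edges of }T_G)\). This is linear in the size of \(T_G\), and hence \(O(n)\). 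The total is therefore \(O(|T_G|)=O(n)\).

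\textbf{Step 3: the \(O(n+m)\) bound.} For the second claim I would invoke the linear-time recognition and decomposition algorithm of Giakoumakis et al.~\cite{giakoumakis1997p_4}. It outputs \(T_G\), with the spider partitions and labels, in \(O(n+m)\) time. Adding the \(O(n)\) bound from Step 2 and using the correctness established in Theorem~\ref{thm:p4-tidy-roman-correct} gives the result.

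The main obstacle is making sure the stored information at spider nodes is available in \(O(1)\): \(k\), thin/thick, and whether twin replacement occurred. If the cited algorithm does not output \(k\) explicitly, I would recover it from \(|S\cup C|\) adjusted by the twin flag. Since \(\sum|S\cup C|\le n\) over all spider nodes, this recovery is also linear overall.
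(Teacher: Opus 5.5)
Your proposal is correct and follows the paper's proof: a single postorder pass with \(O(1)\) work at basic and spider/quasi-spider nodes and \(O(t)\) work at disjoint-union and join nodes, combined with the cited \(O(n+m)\) decomposition algorithm. Beyond the paper, you also prove that \(T_G\) has \(O(n)\) size and compute \(n_h\) bottom-up; the paper only asserts the first and leaves the second implicit.

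One small correction: a spider or quasi-spider node with empty head \(R\) is also a leaf, so not every leaf is basic. Its vertex set \(S\cup C\) is nonempty and disjoint from the vertex sets of the other leaves, so your bounds of at most \(n\) leaves and at most \(2n\) nodes still hold.
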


\begin{proof}
	Let \(T_G\) be the decomposition tree. The algorithm processes the nodes of \(T_G\) once, in postorder. At a basic node isomorphic to \(K_1\), \(P_5\), \(\overline{P_5}\), or \(C_5\), the algorithm stores two fixed values. Therefore, each basic node is processed in constant time.
	
	Consider a disconnected node \(h\) with children \(h_1,\ldots,h_t\). The algorithm computes the sum \(A=\sum_{i=1}^{t}\rho(h_i)\) and minimum value \(\delta=\min_{1\leq i\leq t}\{\rho^+(h_i)-\rho(h_i)\}\). Both quantities are obtained by scanning the \(t\) children once. Therefore, the time spent at such a node is \(O(t)\).
	
	Now consider a complement-disconnected node \(h\) with children \(h_1,\ldots,h_t\). The algorithm computes \(B=\min_{1\leq i\leq t}\rho^+(h_i)\), and then evaluates the two formulas \(\rho^+(h)=\min\{4,B\}\) and \(\rho(h)=\min\{n_h,\rho^+(h)\}\). Again, this requires one scan over the \(t\) children and a constant amount of additional work. Hence, the time spent at this node is \(O(t)\).
	
	Finally, consider a spider or quasi-spider node. The decomposition node already specifies whether the node is thin or thick and whether it is a spider or quasi-spider. It also provides the underlying spider partition \(S\cup C\cup R\), together with the distinguished child representing the head when \(R\neq\emptyset\). The algorithm only needs the value \(k=|S|=|C|\) in the underlying spider. If the node is thin, it stores \(\rho(h)=\rho^+(h)=k+1\); if the node is thick, it stores \(\rho(h)=\rho^+(h)=3\). Thus, each spider or quasi-spider node is processed in constant time once the decomposition information is available.
	
	Therefore, the total running time of the bottom-up computation is proportional to the number of basic nodes plus the total number of child pointers over all disconnected and complement-disconnected nodes plus the number of spider and quasi-spider nodes. This is linear in the size of \(T_G\). The decomposition tree of Giakoumakis et al.~\cite{giakoumakis1997p_4} has linear size in the number of vertices. Hence, Algorithm~\ref{alg:roman-p4-tidy} runs in linear time when the decomposition tree is given. If the input graph is given without the decomposition tree, we first construct the Giakoumakis--Roussel--Thuillier \(P_4\)-tidy decomposition tree and then apply the algorithm. The recognition and decomposition procedure of Giakoumakis et al.~\cite{giakoumakis1997p_4} runs in \(O(n+m)\) time for a graph with \(n\) vertices and \(m\) edges. The dynamic-programming part is linear in the size of the decomposition tree, and hence does not dominate this preprocessing time. Therefore, the full algorithm computes \(\gamma_R(G)\) in \(O(n+m)\) time.
	
	In particular, since cographs are \(P_4\)-free and hence \(P_4\)-tidy, this theorem is consistent with the \(O(n+m)\)-time algorithm of Liedloff et al.~\cite{liedloff2008efficient} for cographs, while extending linear-time solvability to the full \(P_4\)-tidy class.
\end{proof}

\section{Conclusion}\label{sec:conclusion}
In this paper, we studied the \textsc{Roman Domination Problem} on three structured graph classes. For circular-convex bipartite graphs, we gave an \(O(n^6)\)-time algorithm by cutting the circular ordering, branching over at most two wrap-around vertices assigned value \(2\), and then applying a boundary-aware dynamic program. This extends the known tractability of Roman domination from convex bipartite graphs to their circular-convex superclass.

For triad-convex bipartite graphs, we exploited the three-arm structure of the underlying subdivision of \(K_{1,3}\). We showed that there exists an optimal Roman dominating function in which at most three central \(Y\)-vertices receive value \(2\). By branching over these vertices and over the value assigned to the center, the three arms can be solved independently, and their solutions can subsequently be combined. This gives an \(O(n^7)\)-time algorithm. Since \textsc{Roman Domination} is NP-complete on broader tree-convex subclasses, such as star-convex and comb-convex bipartite graphs~\cite{padamutham2020algorithmic}, this result identifies triad-convex bipartite graphs as a tractable restricted subclass within a generally hard tree-convex setting.

We also considered \(P_4\)-tidy graphs. Roman domination is already known to be solvable in \(O(n+m)\) time on cographs~\cite{liedloff2008efficient}, which are precisely the \(P_4\)-free graphs. Using the structural decomposition of \(P_4\)-tidy graphs due to Giakoumakis et al.~\cite{giakoumakis1997p_4}, we extended this tractability to the larger class of \(P_4\)-tidy graphs. The resulting direct algorithm stores \(\gamma_R\) and the auxiliary value \(\gamma_R^+\) at each decomposition node and provides explicit rules for disjoint-union, join, spider, quasi-spider, and exceptional nodes. Together with the linear-time recognition and decomposition procedure, this yields an \(O(n+m)\)-time algorithm for \(P_4\)-tidy graphs.

Several directions remain open. It would be interesting to improve the running times of the circular-convex and triad-convex algorithms and to incorporate recognition and representation construction for these two graph classes into the algorithmic framework. Another natural direction is to identify further tractable subclasses of tree-convex bipartite graphs. Finally, the decomposition-based approach for \(P_4\)-tidy graphs may be extendable to broader \(P_4\)-structured graph classes and to other Roman-type domination variants.

\bibliographystyle{plain}
\bibliography{RD_ref}

\end{document}